\tikzset{snake it/.style={decorate, decoration=snake}}
\newtheorem{theorem}{Theorem}
\newtheorem{proposition}{Proposition}
\newtheorem{lemma}{Lemma}
\newtheorem{corollary}{Corollary}
\newtheorem{definition}{Definition}
\newcommand{\nd}[1][d]{\mathcal{N}_{#1}}
\newcommand{\pd}[1][d]{\mathcal{P}_{#1}}
\newcommand{\G}{{\mathcal{G}}}
\newcommand{\C}{{\mathcal{C}}}
\newcommand{\W}{{W_{c}}}
\newcommand{\AB}{{[A~B]}}
\title{Herdability of Linear Systems \\ Based on Sign Patterns and Graph Structures}
\author{Sebastian F. Ruf$^{1,+}$, Magnus Egerstedt$^{2}$, and Jeff S. Shamma$^{3}$
	\thanks{$^{1}$S. F. Ruf is with the Center for Complex Networks Research and Department of Psychology, Northeastern University, Boston, MA 02115 and can be reached at
		$858-752-3571$ and {\tt\small ruf@gatech.edu}} %
	\thanks{$+$ Corresponding author.}
		\thanks{$^{2}$ M. Egerstedt is with the School of Electrical and Computer Engineering, Georgia Institute of Technology, Atlanta, GA 30332 and can be reached at $404-894-4468$ and {\tt\small magnus@gatech.edu} }%
	\thanks{$^{3}$ J. S. Shamma is with Computer, Electrical and Mathematical Science and Engineering Division, King Abdullah University of Science and Technology (KAUST), Saudi Arabia, and can be reached at {\tt\small jeff.shamma@kaust.edu.sa}}%
	\thanks{$^{\star}$ This work was partially supported by funding from KAUST}
}
\begin{document}
	\maketitle
	\thispagestyle{empty}
	\pagestyle{empty}
	\begin{abstract}
		We consider the notion of herdability, a set-based reachability condition, which asks whether the state of a system can be controlled to be element-wise larger than a non-negative threshold. First a number of {foundational} results on herdability of a continuous time, linear time invariant system are presented. These show that the herdability of a linear system can be determined based on certain matrices, such as the controllability matrix, which arise in the study of controllability of linear systems. {Second,} the relationship between the sign pattern of the underlying graph structure of a system and the herdability properties of the system is {investigated.} In doing so the notion of sign herdability is introduced which captures classes of systems whose sign pattern determines their herdability. We identify a set of conditions, first on the sign pattern of the controllability matrix and then on the underlying graph structure, {that} ensure that the system is sign herdable. 
	\end{abstract}
\section{Introduction}\label{sec:intro}
Controllability is a fundamental property of a dynamical system; its study began with the work of Kalman et. al in the 1960s \cite{kalmancont} and recently has seen a renewed interest in the study of networked systems \cite{mesbahi2010graph} and complex networks \cite{Liu2011}. However there are cases where a system need not be completely controllable to achieve desirable system outcomes: for example if the system is stabilizable \cite{hespanha2009linear}, if a desired end point can be reached \cite{jadbabaie2019minimal} or equivalently if it is transittable \cite{Wu2014}, or if the system is target {(output)} controllable \cite{Gao2014a}.

 This paper discusses the behavior of a class of systems in the case where complete controllability is not required by considering the reachability of a specific set in the state space, rather than the whole state space as is desired in complete controllability. This approach builds on a history of understanding dynamical systems through the geometry of reachable spaces \cite{wonham,egline} and is also distinct from the approach when considering, for example, a predictive control problem \cite{borrelli2017predictive}; here the goal is {to }understand whether the system can reach a specific set instead of understanding the reachable set of the system.

To see why one might take into account the reachability of a set, consider the case of a dynamical system where the state represents the percentage of a given community that will vote for a political candidate and {in which a political campaign can apply advertising effort as a control signal.} Here {the} political campaign is successful if the state can be driven high enough for the candidate to win, regardless of whether communities can be made to vote at any specific percentage as would be required by complete controllability. 

In order to study systems that are not completely controllable but for which certain desirable control outcomes are still achievable, this paper considers the set-based reachability condition of herdability, which captures whether the components of the state can be driven above a non-negative threshold. {This notion is called herdability to capture cases like herding where the goal is not that each sheep be placed perfectly in the meadow, but that the herd can make it into the pen.} This target set describes desired behavior in social and biological sciences where many systems act based on thresholds, for example collective social behavior \cite{granovetter1978threshold,schelling1971dynamic}, the diffusion of innovations \cite{valente1995network}, quorum sensing \cite{miller2001quorum}, and the firing of a neuron \cite{hodgkin1952quantitative}.

More formally, a continuous time, linear time invariant (LTI) system, referred to from now on simply as a linear system, \begin{equation}\label{eq:sys}
\dot{\mathbf{x}}=A\mathbf{x}+B\mathbf{u}
\end{equation}
where $A\in\mathbb{R}^{n\times n}$ and $B\in\mathbb{R}^{n\times m}$, is \emph{completely herdable} if {for any initial condition} there exists a control input that makes the state enter the set $\mathcal{H}_d = \{ \mathbf{x} \in \mathbb R^{n} : \mathbf{x}_{i} \geq d\}$ for all $d\geq0$, where $ \mathbf{x}_{i}$ is the $i$-th element of $\mathbf{x}$. 
Returning to the example of voting in an election, where $\mathbf{x}_{i}$ now represents the percentage of community $i$ that will vote for a candidate, one can see that a sufficient condition to win the election is to reach the set $\mathcal{H}_{.5} = \{ \mathbf{x} \in \mathbb R^{n} : (\mathbf{x})_{i} \geq .5\}$. 

This paper begins by introducing theoretical results on the herdability of a {linear} system, which are related to standard results on the controllability of LTI systems \cite{hespanha2009linear}. After the introduction of these results, this paper considers the herdability of LTI systems based on the structure of the underlying interaction graph, which encodes information about how states and inputs interact with each other based on the system matrices $A$ and $B$. The relationship between a graph and a dynamical system which evolves over that graph has previously received considerable attention, see for example \cite{mesbahi2010graph,liu2015control}. The existing work can be roughly classified into two primary approaches. 

The first approach translates a specific graph structure to system dynamics, which in many cases are consensus dynamics. Consensus dynamics are used in the context of multi agent robotic and social systems \cite{mesbahi2010graph,abelson1964mathematical}. The controllability of these consensus systems has been shown to be directly related to the structure of the underlying graph \cite{ rahmani2009controllability,martini2010controllability,nabi2013controllability,parlangeli2012reachability,notarstefano2013controllability}. System controllability is lost when nodes are symmetric with respect to an input, where symmetry is discussed here in terms of the automorphism structure of the graph \cite{ rahmani2009controllability,martini2010controllability,chapman2014symmetry}. Symmetry appears in the context of structural controllability as a dilation\cite{Lin1974}. In the case of herdability, it is possible to have certain types of symmetry without degrading the herdability of a system.  

 The second approach takes classes of linear dynamical systems and maps them to a graph to discuss properties of all systems that share the same graph structure. This approach is known generally as qualitative systems analysis, {and was introduced to deal with the inherent uncertainty in system parameters. Qualitative systems analysis} considers structural controllability, \cite{Lin1974,shields1976structural,glover1976characterization}, its extension strong structural controllability \cite{mayeda1979strong}, and sign controllability \cite{johnson1993sign,hartung2013characterization,tsatsomeros1998sign,olesky1993qualitative}. {Structural controllability and strong structural controllability were introduced to describes systems where an interaction between state variables is known to be present but of unknown magnitude. } In structural controllability, a dynamical system is represented by a graph in which each edge of the graph is assigned a weight in $\mathbb{R}$. A system is structurally controllable (strongly structurally controllable) if and only if it is controllable for almost all (all) weights that are assigned to the edges, which is a property that can be verified directly from the structure of the underlying graph. Structural controllability has seen application in the study of complex networks, see \cite{liu2015control} for an extended discussion.
  
 {Sign controllability takes into account more information than structural controllability, assuming that both the presence and the sign of an interaction between state variables is known. In a linear system, this translates into using the sign pattern of the system matrices to determines controllability.} {The application of sign pattern to understand controllability} builds on two sets of results from the economics and ecology literature. The first is sign stability \cite{quirk1965qualitative,may1973qualitative}, which asks whether a matrix is stable based on its sign pattern. {Sign stability has been applied to understand biological systems, for example \cite{blanchini2014structural}.} The other is sign solvability \cite{brualdi2009matrices}; which asks, when solving the matrix equation $A\mathbf{x}=\mathbf{b}$, whether the sign pattern of $\mathbf{x}$ is uniquely determined by the sign pattern of $A$ and $\mathbf{b}$.
 
 Sign controllability has been considered with regard to the structure of the various system matrices \cite{johnson1993sign, hartung2013characterization} as well as the sign pattern of the underlying graph \cite{tsatsomeros1998sign}. {To date, the results on sign controllability have been limited. Results for the case of a non-negative $A$ matrix are considered in \cite{johnson1993sign}, while \cite{hartung2013characterization} considers $A$ matrices that have real eigenvalues. {Reference} \cite{tsatsomeros1998sign} extends to conditions based on the underlying graph structure, however the results depend on a decision problem which was shown to be NP-complete in \cite{klee1984signsolvability}.}

{Although there has not been success in applying the study of sign controllability to complex networks, there is much recent interest to understand system behavior based on sign pattern.} Signed graphs are used in the social networks context to represent systems in which agents are both friends and enemies \cite{wasserman1994social,easley2010networks}. These networks have also been of recent interest in the controls community \cite{altafini2013consensus,alemzadeh2017controllability}. 

{In this work, we consider the problem of determining system herdability based on sign pattern for two reasons. First, the applications of interest for herdability are those that are inherently uncertain but where sign pattern can be reasonably known, i.e. biological and social systems. Second, the results of first half of the paper show that sign pattern can be used to characterize the herdability of a system. 
	
	In that light, }the second half of this paper shares the approach of sign controllability in that the control properties of classes of systems are considered based first on the sign pattern of the controllability matrix and then on the sign pattern of the underlying graph structure. Specifically this paper represents the interaction structure of a linear system as a signed, directed graph and explores when the sign pattern is sufficient to determine system herdabiliy. In this light, the central problem of the paper can be phrased in a social networks context as follows: how does the grouping of friends and enemies in the network relate to the ability to convince agents in the system to hold a high opinion?

	 		{This paper presents theoretical underpinnings of the notion of herdability and extends the conference paper \cite{rufacc} as follows:}
	 		\begin{itemize}
	 			\item {Basic theoretical results on the notion of herdability are developed.} These results take the form of conditions on the range of the controllability matrix $\C$ {as well as conditions on the controllability grammian $\W$, and the matrix $\AB$, which were not included in the conference paper}. These results {are extended to novel} sufficient conditions on the columns of the controllability matrix $\C$.
	 			
	 			\item{Conditions which ensure that the controllability matrix is sign-definite are introduced.}
	 			\item The basic characterization of herdability {is extended } to conditions based on the underlying graph of the LTI system. These include a necessary condition for herdability, extensions of the {novel }sufficient conditions for complete herdability, and a complete characterization of the herdability properties of systems which have an underlying graph which is a directed out branching. 
	 		\end{itemize}
The rest of the paper is organized as follows: Section~\ref{sec:herd} introduces the basic theory of herdable system. This includes the necessary definitions and necessary and sufficient conditions for herdability. Sufficient conditions for herdability are considered in Section \ref{sec:class}. In Section~\ref{sec:pre} a graph theoretic characterization of the interaction structure of a linear system is presented. Section \ref{sec:graph_con} considers graphical conditions for herdability including a neccesary condition in Section \ref{sec:nes}, sufficient conditions in Section \ref{sec:suf_graph} and selecting a herdable subset for graphs that are represented by a directed out-branching in Section \ref{sec:outb}. The paper concludes in Section~\ref{sec:conc}.
	 
	 	 \subsection{Notation:}\label{sec:note} For a vector $\mathbf{k}\in\mathbb{R}^{n}$, $\mathbf{k}_{i}$ refers to the $i$-th element of $\mathbf{k}$. For a matrix $K\in\mathbb{R}^{n\times m}$, $K_{i,:}$ refers to the $i$-th row of $K$, $K_{:,j}$ refers to the $j$-th column of $K$ and $K_{i,j}$ to the $i,j$-th element of $K$. The cardinality of the set $\mathcal{S}$ is expressed as $|S|$. Let $\mathrm{sgn}(\cdot)$ denote the sign function which is defined as 
	 	 \begin{equation}
	 	 \mathrm{sgn}(x) = \left\{\begin{array}{lr}
	 	 -1 & \text{for } x<0,\\
	 	 0 & \text{for } x=0,\\
	 	 1 & \text{for } x>0.
	 	 \end{array}\right.
	 	 \end{equation}  Let $\mathbf{0}_{n}\in\mathbb{R}^{n}$ be a vector of zeros, $\mathbf{1}_{n}\in\mathbb{R}^{n}$ be a vector of ones, and $0_{n\times m}\in\mathbb{R}^{n\times m}$ be a matrix of zeros. Logical AND is denoted by $\wedge$ and $\vee$ denotes logical OR and $\veebar$ denotes logical EXCLUSIVE OR. A vector is balanced if it is the zero vector or contains both positive and negative elements. A vector is unisigned if its non-zero elements all have the same sign. A unisigned vector is positive (negative) if all non-zero elements have a positive (negative) sign. 
	 	\section{Herdability {Based On Sign Patterns}}\label{sec:herd}
	 	In this section, the basic theory of the herdability of continuous time, linear dynamical systems is presented as well as a characterization of herdability based on matrices which relate to the reachability properties of a linear system. We begin with the formal definition of herdability:  
	 	\begin{definition}
	 		The state $x_i$ of a linear system is $d$-herdable if $\forall \mathbf{x}(0)\in\mathbb{R}^{n}$ , there exists a finite time $t_{f}$ and an input $\mathbf{u}(t), \ t\in[0,t_{f}]$ such that $\mathbf{x}_{i}(t_{f})\geq d$ under control input $\mathbf{u}(t)$.
	 	\end{definition}
	 	
	 	If the state is $d$-herdable for any $d\geq 0$ it will be said to be herdable. In the case of linear systems, $d$-herdability for $d>0$ and herdability are equivalent. This is because if there exists an input $u^{\ast}$ which can drive the system to be larger than some $d^{\ast}>0$ then that input can be scaled by some positive constant so that the resulting state is element wise larger than any $d>0$. As the following discussion concerns itself with the analysis of linear systems, we will refer only to the herdability of states and where appropriate the herdability of the complete linear system.
	 	
	 	\begin{definition}
	 		The state $x_i$ of a linear system is herdable if $\forall \mathbf{x}(0)\in\mathbb{R}^{n}, h\geq 0$, there exists a finite time $t_{f}$ and an input $\mathbf{u}(t), \ t\in[0,t_{f}]$ such that $\mathbf{x}_{i}(t_{f})\geq h$ under control input $\mathbf{u}(t)$.
	 	\end{definition}
	 	\begin{definition}
	 		A {sub}set of states, $\mathcal{X}\subseteq\{x_1,x_2,\dots,x_n\}$, is herdable if each individual state in $\mathcal{X}$ is herdable together, i.e. if $\forall \mathbf{x}(0)\in\mathbb{R}^{n}$ and $h\geq 0$, there exists a finite time $t_{f}$ and an input $\mathbf{u}(t), \ t\in[0,t_{f}]$ such that $\mathbf{x}_{i}(t_{f})\geq h, ~\forall x_i\in \mathcal{X}$ under control input $\mathbf{u}(t)$. {A linear system is completely herdable if all states in the system are herdable together.} 
	 	\end{definition} 
	 	
{For small systems with known dynamics, one could directly characterize the reachable subspace. For the potential application domains for herdability such characterizations become difficult. Instead of characterizing the whole reachable subspace, we consider whether there is any part of it that intersects the positive orthant.   } {To do so will} require some basic concepts from the study of linear systems, specifically the relation between the reachable subspace and the controllability grammian $\W$ and controllability matrix $\C$. This section follows \cite{hespanha2009linear} but any text on linear systems theory will do.  
	 	
	 	\begin{definition}
	 		The reachable subspace $\mathcal{R}[0,t]$ of a linear system is
	 			 	 \begin{equation}
	 			 	 \begin{aligned}
	 		\mathcal{R}&[0,t]= \\ &\left\{\mathbf{x}_{1}\in\mathbb{R}^{n}:\exists \mathbf{u}:[0,t]\rightarrow\mathbb{R}^{m},~ \mathbf{x}_{1}=\int_{0}^{t}e^{A(t-\tau)}B\mathbf{u}(\tau)d\tau\right\}.
	 		\end{aligned}
	 			 		\end{equation}  
	 		The controllability matrix $\mathcal{C}$ of a linear system is 
	 		\begin{equation}
	 		\mathcal{C}=\left[B,AB,A^{2}B,\dots,A^{n-1}B \right].
	 		\end{equation}

	 		The Controllability Grammian on the time interval $[0,t]$, $W_{c}[0,t]$, of a linear system is \begin{equation}W_c[0,t]=\int_{0}^{t}e^{A\tau}BB^{T}e^{A^{T}\tau}d\tau.\end{equation} \end{definition}

	 	\begin{lemma}(Theorem 11.5 \cite{hespanha2009linear}) \thlabel{lem:rceq} \begin{equation}
	 		\mathcal{R}[0,t]=\mathrm{range}(\mathcal{C})=\mathrm{range}(W_{c}[0,t]).
	 		\end{equation} 
	 	\end{lemma}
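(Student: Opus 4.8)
The plan is a standard three-part argument: first establish $\mathcal{R}[0,t] = \mathrm{range}(W_c[0,t])$ by double inclusion, and then establish $\mathrm{range}(W_c[0,t]) = \mathrm{range}(\mathcal{C})$ by showing the two subspaces have the same orthogonal complement. For the inclusion $\mathrm{range}(W_c[0,t]) \subseteq \mathcal{R}[0,t]$, I would exhibit an explicit input that reaches an arbitrary point $W_c[0,t]\mathbf{z}$. Choosing $\mathbf{u}(\tau) = B^T e^{A^T(t-\tau)}\mathbf{z}$ and performing the change of variables $\sigma = t-\tau$ in $\int_0^t e^{A(t-\tau)}B\mathbf{u}(\tau)\,d\tau$ recovers exactly $\int_0^t e^{A\sigma}BB^T e^{A^T\sigma}\mathbf{z}\,d\sigma = W_c[0,t]\mathbf{z}$, placing this point in $\mathcal{R}[0,t]$.

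For the reverse inclusion $\mathcal{R}[0,t] \subseteq \mathrm{range}(W_c[0,t])$, I would argue through complements. Since $W_c[0,t]$ is symmetric and positive semidefinite, $\mathrm{range}(W_c[0,t])^\perp = \ker(W_c[0,t])$. Taking $\mathbf{z} \in \ker(W_c[0,t])$, the quadratic form $\mathbf{z}^T W_c[0,t]\mathbf{z} = \int_0^t \|B^T e^{A^T\tau}\mathbf{z}\|^2\,d\tau = 0$ forces $B^T e^{A^T\tau}\mathbf{z} = 0$ for all $\tau\in[0,t]$, by continuity and non-negativity of the integrand. Pairing such a $\mathbf{z}$ with any reachable point gives $\mathbf{z}^T\int_0^t e^{A(t-\tau)}B\mathbf{u}(\tau)\,d\tau = \int_0^t (B^T e^{A^T(t-\tau)}\mathbf{z})^T \mathbf{u}(\tau)\,d\tau = 0$, so $\mathbf{z} \perp \mathcal{R}[0,t]$. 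Taking complements completes the first equality.

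To prove $\mathrm{range}(\mathcal{C}) = \mathrm{range}(W_c[0,t])$, I would show $\ker(\mathcal{C}^T) = \ker(W_c[0,t])$, which yields equality of ranges upon taking complements. If $\mathbf{z}\in\ker(W_c[0,t])$, then $B^T e^{A^T\tau}\mathbf{z} = 0$ identically on $[0,t]$; differentiating $k$ times and evaluating at $\tau=0$ gives $B^T(A^T)^k\mathbf{z} = 0$ for every $k$, and in particular for $k = 0,\dots,n-1$, so $\mathbf{z}\in\ker(\mathcal{C}^T)$. Conversely, if $\mathbf{z}\in\ker(\mathcal{C}^T)$, then Cayley--Hamilton extends $B^T(A^T)^k\mathbf{z}=0$ from $k\le n-1$ to all $k\ge 0$; substituting the power series $e^{A^T\tau} = \sum_{k\ge 0}\tfrac{\tau^k}{k!}(A^T)^k$ gives $B^T e^{A^T\tau}\mathbf{z}=0$, hence $\mathbf{z}^T W_c[0,t]\mathbf{z}=0$ and $\mathbf{z}\in\ker(W_c[0,t])$.

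The main obstacle is this last step's passage between the \emph{finite} list $B, AB, \dots, A^{n-1}B$ that defines $\mathcal{C}$ and the \emph{analytic} object $e^{A^T\tau}$ appearing in $W_c[0,t]$: the forward direction relies on the fact that vanishing of $B^T e^{A^T\tau}\mathbf{z}$ on an interval forces all derivatives at $\tau=0$ to vanish, while the reverse direction requires Cayley--Hamilton to close the chain of powers so that the truncation at $n-1$ carries no loss of information. This is the crux that links the algebraic and integral characterizations; the two set inclusions in the first equality are comparatively routine once the explicit input and the positive-semidefiniteness of $W_c[0,t]$ are in hand.
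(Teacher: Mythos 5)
Your proof is correct and complete: both the double-inclusion argument for $\mathcal{R}[0,t]=\mathrm{range}(W_c[0,t])$ and the kernel-matching argument (analyticity in one direction, Cayley--Hamilton in the other) for $\mathrm{range}(W_c[0,t])=\mathrm{range}(\mathcal{C})$ are sound. Note that the paper does not prove this lemma at all --- it simply cites Theorem 11.5 of Hespanha's \emph{Linear Systems Theory} --- and your argument is essentially the standard textbook proof being invoked there, so there is nothing further to reconcile.
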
     

	 	Note that in linear systems, the reachable subspace does not depend explicitly on the time interval used and as such the time interval will be omitted for notational convenience. 
	 	
	 	{Similar to the classical conditions for controllability, it is possible to use $\C$ and $\W$ to develop conditions for herdability.}
	 	
	 	\begin{theorem}\thlabel{lem:herdset}
	 		A set of states $\mathcal{X}\subseteq\{x_1,x_2,\dots,x_n\}$ in a linear system is herdable if and only if there is exists a vector $\mathbf{k}\in \mathrm{range}(\mathcal{C})$ that satisfies $\mathbf{k}_{i}>0$ for all $x_i\in\mathcal{X}$.
	 	\end{theorem}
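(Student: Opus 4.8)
The plan is to connect the definition of herdability of the set $\mathcal{X}$ directly to the reachable subspace via the solution formula for \eqref{eq:sys} and \thref{lem:rceq}. The state of \eqref{eq:sys} at time $t_f$ under input $\mathbf{u}$ is
\begin{equation}
\mathbf{x}(t_f) = e^{At_f}\mathbf{x}(0) + \int_{0}^{t_f} e^{A(t_f-\tau)}B\mathbf{u}(\tau)\,d\tau,
\end{equation}
so $\mathbf{x}(t_f) = e^{At_f}\mathbf{x}(0) + \mathbf{x}_r$, where the forced response $\mathbf{x}_r$ ranges exactly over $\mathcal{R}[0,t_f] = \mathrm{range}(\mathcal{C})$ by \thref{lem:rceq}. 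Since $\mathrm{range}(\mathcal{C})$ is a subspace independent of the (positive) final time, I would fix any $t_f>0$ throughout and thereby reduce herdability of $\mathcal{X}$ to the question of whether an affine shift $e^{At_f}\mathbf{x}(0) + \mathrm{range}(\mathcal{C})$ can be driven componentwise above a threshold on the indices in $\mathcal{X}$.

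For sufficiency, suppose such a $\mathbf{k}\in\mathrm{range}(\mathcal{C})$ exists with $\mathbf{k}_i>0$ for every $x_i\in\mathcal{X}$. Given an arbitrary $\mathbf{x}(0)$ and $h\geq 0$, I would use that $\alpha\mathbf{k}\in\mathrm{range}(\mathcal{C})$ for every scalar $\alpha$ (closure of a subspace under scaling), so some input produces $\mathbf{x}_r=\alpha\mathbf{k}$ and hence $\mathbf{x}(t_f)=e^{At_f}\mathbf{x}(0)+\alpha\mathbf{k}$. For each $x_i\in\mathcal{X}$ the $i$-th component equals $(e^{At_f}\mathbf{x}(0))_i+\alpha\mathbf{k}_i$, which tends to $+\infty$ as $\alpha\to\infty$ because $\mathbf{k}_i>0$. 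As $\mathcal{X}$ is finite, choosing
\begin{equation}
\alpha \geq \max_{x_i\in\mathcal{X}} \frac{h-(e^{At_f}\mathbf{x}(0))_i}{\mathbf{k}_i}
\end{equation}
makes $\mathbf{x}_i(t_f)\geq h$ simultaneously for all $x_i\in\mathcal{X}$, which is precisely the definition of $\mathcal{X}$ being herdable.

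For necessity, I would exploit that herdability must hold for \emph{every} initial condition and \emph{every} threshold; in particular it holds for $\mathbf{x}(0)=\mathbf{0}_n$ and $h=1>0$. With the zero initial condition the free response vanishes, so the herding input yields a state $\mathbf{x}(t_f)=\mathbf{x}_r\in\mathrm{range}(\mathcal{C})$ with $(\mathbf{x}_r)_i\geq 1>0$ for all $x_i\in\mathcal{X}$; taking $\mathbf{k}=\mathbf{x}_r$ produces the required vector.

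The routine pieces are the solution formula and the appeal to \thref{lem:rceq}. The only point requiring care is in the sufficiency direction, where a single scalar $\alpha$ must dominate the free response $(e^{At_f}\mathbf{x}(0))_i$ uniformly across all of $\mathcal{X}$ and for an arbitrary initial condition. I expect this to be the main (though mild) obstacle, and it is resolved by the strict positivity of each $\mathbf{k}_i$ together with the finiteness of $\mathcal{X}$, which makes the maximum above finite, and by the subspace structure of $\mathrm{range}(\mathcal{C})$, which permits scaling $\mathbf{k}$ without bound.
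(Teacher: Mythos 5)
Your proposal is correct and follows essentially the same route as the paper's proof: both directions rest on \thref{lem:rceq}, the sufficiency argument scales a vector in $\mathrm{range}(\mathcal{C})$ to dominate the free response $e^{At}\mathbf{x}(0)$ uniformly over $\mathcal{X}$, and the necessity argument takes the zero initial condition so the reached state itself lies in $\mathrm{range}(\mathcal{C})$ and serves as $\mathbf{k}$. Your scaling constant (a componentwise maximum of ratios over $\mathcal{X}$) is a slightly cleaner packaging of the paper's $\gamma > \max_{j}(h\mathbf{1}_{n}-e^{At}\mathbf{x}(0))_{j}/\min\mathcal{K}$, but the idea is identical.
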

	 	\begin{proof}
	 		Define the set  $\mathcal{K}$ to be the set that contains the positive elements of $\mathbf{k}$, $\mathcal{K}=\{p \ | \ p>0 \ \wedge \ \exists \ x_i \text{ such that } \mathbf{k}_{i}=p\}.$ 
	 		
	 		\noindent($\mathbf{k}\in \mathrm{range}(\mathcal{C})\Rightarrow$ $\mathcal{X}$ is herdable) Consider the problem of controlling all states in the set $\mathcal{X}$ to be greater than some lower threshold $h\geq 0$ from an initial condition $\mathbf{x}(0)$. Suppose there is a $\mathbf{k}\in \mathrm{range}(\mathcal{C})$, that satisfies $\mathbf{k}_{i}>0$ if $x_i\in\mathcal{X}$. As $\mathbf{k}\in \mathrm{range}(\mathcal{C})$, $\exists \pmb{\alpha}$ such that \begin{equation}
	 		\mathcal{C}\pmb{\alpha}=\mathbf{k}.
	 		\end{equation}  If \begin{equation}\gamma>\frac{\max_{j} \  (h\mathbf{1}_{n}-e^{At}\mathbf{x}(0))_{j}}{\min{\mathcal{K}}}\end{equation} and $\mathbf{v}=\gamma\pmb{\alpha}$ then for all $x_i\in \mathcal{X}$ it holds that
	 		\begin{equation}
	 		(\mathcal{C}\mathbf{v})_{i} > (h\mathbf{1}_{n}-e^{At}\mathbf{x}(0))_{i}.
	 		\end{equation} 
	 		As the range of $\mathcal{C}$ is the same as the reachable subspace by \thref{lem:rceq}, $\exists \mathbf{u}(\cdot)$ such that for all $x_i\in \mathcal{X}$ 
	 		\begin{equation}(e^{At}\mathbf{x}(0)+\int_{0}^{t}e^{A(t-\tau)}B\mathbf{u}(\tau)d\tau)_{i}>h
	 		\end{equation}
	 		then all states in $\mathcal{X}$ can be made larger that $h$ and as $h$ is arbitrary the subset of states $\mathcal{X}$ is herdable.
	 		
	 		\noindent($\mathcal{X}$ is herdable $\Rightarrow\mathbf{k}\in \mathrm{range}(\mathcal{C})$) 
	 		As the set of state nodes $\mathcal{X}$ is herdable, each element of $\mathcal{X}$ can be made larger than some $h^{\ast}>0$ from any initial condition. Consider the initial condition $x(0)=\mathbf{0}_{n}$. Then by the herdability of the set $\mathcal{X}$ there exists a vector $\mathbf{k}^{\ast}$ that satisfies $\mathbf{k}^{\ast}_{i}>h^{\ast} \ ~\forall x_i\in \mathcal{X}$ and an input $\mathbf{u}(\cdot)$ such that \begin{equation}\int_{0}^{t}e^{A(t-\tau)}B\mathbf{u}(\tau)d\tau=\mathbf{k}^{\ast}
	 		\end{equation} Then $\mathbf{k}^{\ast}_{i}>0 \ ~\forall x_i\in\mathcal{X}$ by the definition of $h^{\ast}$. By the definition of $\mathcal{R}[0,t]$, $\mathbf{k}^{\ast}\in \mathcal{R}[0,t]$ and consequently $\mathbf{k}^{\ast}\in \mathrm{range}(\mathcal{C})$ by \thref{lem:rceq}.
	 	\end{proof}
	 	
	 	\begin{corollary}\thlabel{cor:krange}
	 		A linear system is completely herdable if and only if there exists an element-wise positive vector $\mathbf{k}\in \mathrm{range}(\mathcal{C})$. 
	 	\end{corollary}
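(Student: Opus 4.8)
The plan is to obtain this statement as an immediate specialization of \thref{lem:herdset}. By the definition of complete herdability given above, a linear system is completely herdable exactly when the full set of states $\mathcal{X} = \{x_1, x_2, \dots, x_n\}$ is herdable together. Thus the corollary is nothing more than \thref{lem:herdset} applied to this particular choice of $\mathcal{X}$, and the entire content of the result is carried by that theorem.

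Concretely, I would invoke \thref{lem:herdset} with $\mathcal{X} = \{x_1, \dots, x_n\}$. The theorem asserts that this set is herdable if and only if there exists a vector $\mathbf{k} \in \mathrm{range}(\mathcal{C})$ satisfying $\mathbf{k}_{i} > 0$ for all $x_i \in \mathcal{X}$. Since $\mathcal{X}$ now contains every state, the index $i$ ranges over all of $\{1, \dots, n\}$, so the condition becomes $\mathbf{k}_{i} > 0$ for every $i$. This is precisely the statement that $\mathbf{k}$ is an element-wise positive vector, yielding the claimed equivalence in both directions simultaneously.

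There is no genuine obstacle here. The only point worth stating explicitly for rigor is that the quantifier over $\mathcal{X}$ appearing in \thref{lem:herdset} collapses to a quantifier over all $n$ states once $\mathcal{X}$ is taken to be the whole state set, and that the phrase ``$\mathbf{k}_{i} > 0$ for all $i$'' is synonymous with ``$\mathbf{k}$ is element-wise positive.'' Both of these are immediate from the definitions, so the proof reduces to a single application of the preceding theorem followed by this reinterpretation of the positivity condition.
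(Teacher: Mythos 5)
Your proposal is correct and is exactly the paper's own (implicit) argument: the paper states this as a corollary of \thref{lem:herdset} with no separate proof, precisely because taking $\mathcal{X}=\{x_1,\dots,x_n\}$ in that theorem turns the condition ``$\mathbf{k}_{i}>0$ for all $x_i\in\mathcal{X}$'' into element-wise positivity of $\mathbf{k}$, which is the corollary's statement. Nothing further is needed.
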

	 	A similar statement can be made about the controllability grammian $W_{c}$ of a system, following directly from \thref{lem:rceq} and \thref{lem:herdset}. 
	 	\begin{corollary}\thlabel{lem:herdsetwc}
	 		A set of states $\mathcal{X}\subseteq\{x_1,x_2,\dots,x_n\}$ in a linear system is herdable if and only if there is exists a vector $\mathbf{k}\in \mathrm{range}(W_{c})$ that satisfies $\mathbf{k}_{i}>0$ for all $x_i\in\mathcal{X}$. A linear system is completely herdable if and only if there exists an element-wise positive vector $\mathbf{k}\in\mathrm{range}(W_{c})$.
	 	\end{corollary}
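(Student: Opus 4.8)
The plan is to observe that this statement is an immediate consequence of the subspace identity established in \thref{lem:rceq} together with the characterization already proved in \thref{lem:herdset}, so that no independent argument is needed. First I would recall that \thref{lem:rceq} asserts $\mathrm{range}(\mathcal{C}) = \mathrm{range}(W_{c}[0,t])$, i.e. these are the same subspace of $\mathbb{R}^{n}$. The hypothesis appearing in \thref{lem:herdset} depends on $\mathcal{C}$ only through the subspace $\mathrm{range}(\mathcal{C})$: it asks for the existence of a vector $\mathbf{k}$ lying in that subspace and satisfying $\mathbf{k}_{i}>0$ for all $x_i\in\mathcal{X}$. Consequently the statement of \thref{lem:herdset} remains valid verbatim when $\mathrm{range}(\mathcal{C})$ is replaced by $\mathrm{range}(W_{c})$.

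Concretely, I would chain two equivalences to obtain the first claim. A set $\mathcal{X}$ is herdable if and only if there exists $\mathbf{k}\in\mathrm{range}(\mathcal{C})$ with $\mathbf{k}_{i}>0$ for all $x_i\in\mathcal{X}$, by \thref{lem:herdset}; and $\mathbf{k}\in\mathrm{range}(\mathcal{C})$ if and only if $\mathbf{k}\in\mathrm{range}(W_{c})$, by \thref{lem:rceq}. Composing these gives exactly the asserted equivalence in terms of $\mathrm{range}(W_{c})$. The second claim, on complete herdability, follows by applying the identical substitution to \thref{cor:krange}, taking $\mathcal{X}=\{x_1,\dots,x_n\}$ so that the required $\mathbf{k}$ is element-wise positive; the equivalence between the condition on $\mathrm{range}(\mathcal{C})$ and the condition on $\mathrm{range}(W_{c})$ is again supplied by \thref{lem:rceq}.

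The only point that merits a word of care is that the definition of $W_{c}[0,t]$ carries an explicit dependence on the time horizon $t$, whereas $\mathcal{C}$ does not. However, \thref{lem:rceq} shows that $\mathrm{range}(W_{c}[0,t])$ coincides with the time-independent subspace $\mathrm{range}(\mathcal{C})$ for every admissible $t$, which justifies suppressing the interval and writing simply $\mathrm{range}(W_{c})$, exactly as noted in the remark following \thref{lem:rceq}. I do not expect any genuine obstacle: the entire content of the corollary is inherited from \thref{lem:herdset} and \thref{cor:krange} through the subspace identity, so the proof is a short substitution rather than a fresh derivation.
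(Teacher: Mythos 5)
Your proposal is correct and matches the paper exactly: the paper states this corollary as ``following directly from \thref{lem:rceq} and \thref{lem:herdset},'' which is precisely the substitution of $\mathrm{range}(W_{c})$ for $\mathrm{range}(\mathcal{C})$ via the subspace identity that you carry out. Your added remark on the time-independence of $\mathrm{range}(W_{c}[0,t])$ is consistent with the paper's own note following \thref{lem:rceq}.
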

	 	There is also a necessary condition for herdability which arises based on the characterization of \thref{lem:herdset}. 
	 	\begin{theorem}\thlabel{th:abn}
	 		If a linear system is completely herdable then there exists an element-wise positive vector $\mathbf{k}\in\mathrm{range}([A \ B])$.
	 	\end{theorem}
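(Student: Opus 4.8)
The plan is to reduce the statement to the characterization of complete herdability already in hand, namely \thref{cor:krange}, and then to exploit the fact that the range of the controllability matrix $\C$ sits inside the range of $\AB$. The whole argument rests on the subspace containment $\mathrm{range}(\C)\subseteq\mathrm{range}(\AB)$: once this is established, the positive witness vector supplied by complete herdability automatically lies in $\mathrm{range}(\AB)$, and nothing further is needed.

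First I would invoke \thref{cor:krange}: since the system is completely herdable, there exists an element-wise positive vector $\mathbf{k}\in\mathrm{range}(\C)$, where $\C=[B,AB,A^2B,\dots,A^{n-1}B]$. The goal is then simply to relocate this same $\mathbf{k}$ into $\mathrm{range}(\AB)$. To that end I would prove $\mathrm{range}(\C)\subseteq\mathrm{range}(\AB)$ by a column-by-column inspection. The columns of $\C$ drawn from the leading block $B$ lie in $\mathrm{range}(B)\subseteq\mathrm{range}(\AB)$. Every remaining column comes from a block $A^iB$ with $i\geq 1$; writing $A^iB=A\,(A^{i-1}B)$ exhibits each such column as $A$ applied to some vector, so it lies in $\mathrm{range}(A)\subseteq\mathrm{range}(\AB)$. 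Hence every column of $\C$ lies in $\mathrm{range}(\AB)$, and since the range of a matrix is the span of its columns, the containment $\mathrm{range}(\C)\subseteq\mathrm{range}(\AB)$ follows.

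Combining the two observations, the element-wise positive vector $\mathbf{k}\in\mathrm{range}(\C)$ from the first step satisfies $\mathbf{k}\in\mathrm{range}(\AB)$, which is exactly the asserted necessary condition. I do not anticipate a genuine obstacle here: the only substantive content is the range containment, and it is immediate once one notes that each higher-order block $A^iB$ factors through $A$. If a more self-contained derivation were preferred, one could bypass \thref{cor:krange} and instead argue directly from \thref{lem:rceq} that $\mathbf{k}=\C\pmb{\alpha}$ for some coefficient vector $\pmb{\alpha}$, then regroup the product to display $\mathbf{k}$ as a $B$-combination plus an $A$-combination, i.e. as $\AB$ times a suitable vector; this is the same containment written out explicitly.
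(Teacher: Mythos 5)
Your proposal is correct and takes essentially the same approach as the paper: the paper's proof writes $\mathbf{k}=\C\mathbf{y}$, splits $\mathbf{y}$ into blocks, and regroups the sum as $B\mathbf{y}_{1}+A(B\mathbf{y}_{2}+\dots+A^{n-2}B\mathbf{y}_{n})$ to exhibit $\mathbf{k}$ as $\AB$ times an explicit vector, which is precisely the containment $\mathrm{range}(\C)\subseteq\mathrm{range}(\AB)$ that you establish column by column via the factorization $A^{i}B=A\,(A^{i-1}B)$. As you observe in your closing remark, the two arguments are the same algebra in different packaging, so there is nothing missing.
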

	 	\begin{proof}
	 		If a linear system is completely herdable, then by \thref{lem:herdset}, there is an element-wise positive vector $\mathbf{k}\in\mathrm{range}(\mathcal{C}).$ As such there exists a $\mathbf{y}\in\mathbb{R}^{nm}$ such that $$\mathcal{C}\mathbf{y}=\mathbf{k}.$$ Dividing $\mathbf{y}$ into $n$ subcomponents, with each $\mathbf{y}_{i}\in\mathbb{R}^{m}$: \begin{equation}
	 		\mathbf{y}=\begin{bmatrix}
	 		\mathbf{y}_{1}\\
	 		\mathbf{y}_{2}\\
	 		\vdots \\
	 		\mathbf{y}_{n}
	 		\end{bmatrix}
	 		\end{equation}
	 		gives that \begin{equation}
	 		\begin{aligned}
	 		\mathbf{k}&=\mathcal{C}\mathbf{y}\\
	 		&=B\mathbf{y}_{1}+AB\mathbf{y}_{2}+\dots+A^{n-1}B\mathbf{y}_{n}\\
	 		&=B\mathbf{y}_{1}+A(B\mathbf{y}_{2}+AB\mathbf{y}_{3}+\dots+A^{n-2}\mathbf{y}_{n}).\\
	 		\end{aligned}
	 		\end{equation}
	 		Then $\mathbf{k}\in\mathrm{range}([A~B])$ as
	 		\begin{equation}
	 		\mathbf{k}=\begin{bmatrix}
	 		A & B
	 		\end{bmatrix} \begin{bmatrix}
	 		B\mathbf{y}_{2}+AB\mathbf{y}_{3}+\dots+A^{n-2}\mathbf{y}_{n} \\ \mathbf{y}_{1}
	 		\end{bmatrix}.
	 		\end{equation}
	 	\end{proof}
	 	
	 	{Both \thref{lem:herdsetwc} and \thref{th:abn} can provide important information when considering large-scale systems with known dynamics. The infinite horizon controllability grammian can be calculated in linear time if the $A$ matrix is stable, hence \thref{lem:herdsetwc} provides a tractable method to determine herdability computationally.} \thref{th:abn} can be used as an initial check of system herdability when dealing with large systems as there is virtually no computational overhead to generate the matrix $\AB$. {As this paper seeks to understand herdability based on the underlying graph structure, we will primarily discuss the results based on the controllability matrix; which will be shown later to have a natural interpretation based on the underlying graph.}
	 	
	 	{While this paper seeks to understand whether a system is herdable under a fixed $B$, it's interesting to note that \thref{th:abn} could} be used for designing the interaction with the system via the selection of a $B$ matrix. In the case that the $A$ matrix is such that there is no element-wise positive $\mathbf{k}\in\mathrm{range}(A)$ then $B$ can be designed such that there is an element-wise positive  $\mathbf{k}\in\mathrm{range}(\AB)$. However as \thref{th:abn} is a necessary condition, other more computationally expensive methods would be required to verify that the system is actually herdable. 
	 	
	 	{If the system dynamics are known,} determining whether there is some positive unisigned vector $\mathbf{k}$ in the range of a given matrix of interest, be it $\mathcal{C}$, $W_{c}$ or $[A~B]$, can be done simply by finding a basis for the matrix. However as the scale of the system increases and the individual system parameters become uncertain, determining the element-wise positive vector $\mathbf{k}$ which shows that the system is herdable is non-trivial. Many of the results that will follow provide sufficient conditions for existence of a positive unisigned $\mathbf{k}$. 
\subsection{{Relation to Sign Solvability}}
{Before preceding with the characterization of herdability, it's worth introducing concepts from the study on sign solvability \cite{brualdi2009matrices}. The matrix equation $Ax=b$ is sign solvable if for all matrices $A$ and $b$ with the same sign pattern, a solution $x$ exists and all solutions have the same sign pattern. The matrix equation $Ax=b$ is strongly sign solvable if it is solvable and $x$ must be elementwise nonzero. Finally, the matrix equation $Ax=b$ is sign inconsistent if for all $A$ and $b$ with the same sign pattern no solution $x$ exists such that $Ax=b$ \cite{shao1999sign}. These definitions can be translated to conditions for herdability.
	\begin{proposition}
		If $\mathcal{C}x=1_{n}$ is sign inconsistent then $(A,B)$ is not completely herdable. If $\mathcal{C}x=1_{n}$ is sign solvable or strongly sign solvable then $(A,B)$ is completely herdable. 
	\end{proposition}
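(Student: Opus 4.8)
The plan is to reduce both claims to Corollary~\thref{cor:krange}, which states that $(A,B)$ is completely herdable if and only if there is an element-wise positive vector $\mathbf{k}\in\mathrm{range}(\C)$. The key observation threaded through both directions is that the sign pattern of the all-ones vector $\mathbf{1}_n$ is shared by \emph{every} element-wise positive vector, so any statement quantified over all right-hand sides having the sign pattern of $\mathbf{1}_n$ is in fact a statement about all positive vectors.

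For the first claim, I would start from the definition of sign inconsistency: $\C x=\mathbf{1}_n$ being sign inconsistent means that for every matrix $\tilde{\C}$ sharing the sign pattern of $\C$ and every $\tilde{\mathbf{b}}$ sharing the sign pattern of $\mathbf{1}_n$, the equation $\tilde{\C}x=\tilde{\mathbf{b}}$ has no solution. Instantiating $\tilde{\C}=\C$ (which trivially has the same sign pattern as itself) and letting $\tilde{\mathbf{b}}$ range over all element-wise positive vectors, I conclude that $\C x=\mathbf{k}$ has no solution for any positive $\mathbf{k}$; that is, no element-wise positive vector lies in $\mathrm{range}(\C)$. By the contrapositive of Corollary~\thref{cor:krange}, $(A,B)$ is not completely herdable.

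For the second claim, I would note that both sign solvability and strong sign solvability include, as part of their definitions, the existence of a solution for each matrix/right-hand side pair with the prescribed sign pattern. Instantiating this at the original pair $(\C,\mathbf{1}_n)$ yields an $x$ with $\C x=\mathbf{1}_n$, so $\mathbf{1}_n\in\mathrm{range}(\C)$. Since $\mathbf{1}_n$ is element-wise positive, Corollary~\thref{cor:krange} gives complete herdability of $(A,B)$; the strongly sign solvable case is handled identically, as solvability is already part of its definition and is all that is needed here.

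The only real subtlety, and the step I would treat most carefully, is the universal quantifier in the definition of sign inconsistency. It is tempting to read sign inconsistency merely as ``$\mathbf{1}_n\notin\mathrm{range}(\C)$,'' which would be too weak, since it would leave open the possibility that some \emph{other} positive vector lies in the range and certifies herdability. The correct reading exploits that the quantifier ranges over all $\tilde{\mathbf{b}}$ of positive sign while allowing $\tilde{\C}$ to be fixed at $\C$; this is exactly what upgrades ``$\mathbf{1}_n$ is unreachable'' to ``no positive vector is reachable,'' making the connection to Corollary~\thref{cor:krange} exact rather than merely suggestive.
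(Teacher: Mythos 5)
Your proof is correct and is essentially the paper's own argument: the paper states this proposition without an explicit proof, treating it as an immediate consequence of the sign-solvability definitions and the characterization of complete herdability via an element-wise positive vector in $\mathrm{range}(\C)$ (Corollary~\thref{cor:krange}), which is exactly what you spell out. Your handling of the universal quantifier in sign inconsistency --- fixing $\tilde{\C}=\C$ while letting the right-hand side range over all element-wise positive vectors, so that the conclusion is ``no positive vector lies in $\mathrm{range}(\C)$'' rather than merely ``$\mathbf{1}_n\notin\mathrm{range}(\C)$'' --- is precisely the detail needed to make the first claim rigorous.
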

	Verifying these conditions involves studying L-matrices and S-matrices. An L-matrix has linearly independent rows based on sign pattern and an S-matrix, $\hat{S}$ has an elementwise positive vector $k\in\mathrm{null}(\hat{S})$. If $[\mathcal{C} ~ 1_{n}]$ is an L-matrix then $\mathcal{C}x=1_{n}$ is sign inconsistent \cite{shao1999sign}. Unfortunately verifying that a matrix is an L-matrix is NP-complete \cite{klee1984signsolvability}. 
	
	If $[\mathcal{C}~ -1_{n}]$ is an S-matrix then $\mathcal{C}x=1_{n}$ is strongly sign solvable. The theory for S-matrices was developed for matrices $\hat{S}\in\mathbb{R}^{n\times n+1}$ which implies that these methods can only be used to check herdability in the case that $m=1$ and $\C\in\mathbb{R}^{n\times n}$. In the single input case, strong sign solvability can be verified in polynomial time \cite{klee1987recursive}. 
	
	While these conditions based on sign solvability are sufficient for complete herdability, they are often overly restrictive. The conditions presented in the remainder of this paper for herdability consider whether a solution exists to $\mathcal{C}x=k,$ for some elementwise positive $k$ without the restriction that the sign pattern of such solutions be uniquely determined. It's also worth noting that the approaches of sign solvability often implicitly involve verifying that $\C$ is sign non singular, i.e. that the system is sign controllable.   }
	 	\subsection{Sufficient Conditions for Herdability}\label{sec:class}
	 	The section provides a number of sufficient conditions for herdability based on the structure of the controllability matrix $\mathcal{C}$ and controllability grammian $W_{c}$. 
	 	
	 	To do so requires the following set of definitions from the study of qualitative systems {\cite{brualdi2009matrices}}, which we recall from Section \ref{sec:note}. A vector is balanced if it is the zero vector or contains both positive and negative elements. A vector is unisigned if its non-zero elements all have the same sign. A unisigned vector is positive (negative) if all non-zero elements have a positive (negative) sign. 
	 	
	 	\begin{definition}A state $x_i$ in the system is \textbf{strictly herdable}, if $\exists \mathbf{k} \in \mathcal{R}[0,t]$ such that $\mathbf{k}$ is unisigned and $\mathbf{k}_{i}\neq0$.
	 		A state $x_i$ is \textbf{loosely herdable} if all vectors $\mathbf{k} \in \mathcal{R}[0,t]$ such that $\mathbf{k}_{i}\neq0$ are balanced.
	 	\end{definition}  
 	 
 		{The presence of a loosely herdable node can be verified by showing that $\C x=1_{n}$ is sign inconsistent, which as mentioned previously is NP-Complete.} As such this section focuses on verifying that a state is strictly herdable based on the sign pattern of the controllability matrix $\C$.  
	 	\begin{lemma}\thlabel{lem:s}
	 		{Let $\mathcal{S}\subseteq\{x_{1},x_{2},\dots,x_{n}\}$ be the subset of states such that for all $x_{i}\in\mathcal{S}$ there exists a unisigned column of $\mathcal{C}$ with a non-zero element at position $i$.} Each element of $\mathcal{S}$ is strictly herdable.
	 	\end{lemma}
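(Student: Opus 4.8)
The plan is to exploit the identification of the reachable subspace with the range of the controllability matrix established in \thref{lem:rceq}, together with the observation that strict herdability of a single state requires only a single witness vector. First I would fix an arbitrary $x_i \in \mathcal{S}$. By the hypothesis defining $\mathcal{S}$, there is some column index $j$ such that the column $\mathcal{C}_{:,j}$ is unisigned and its $i$-th entry $(\mathcal{C}_{:,j})_i$ is nonzero.

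The central step is the observation that every column of $\mathcal{C}$ trivially lies in $\mathrm{range}(\mathcal{C})$, which by \thref{lem:rceq} coincides with the reachable subspace $\mathcal{R}[0,t]$. Hence I would take the witness vector $\mathbf{k} = \mathcal{C}_{:,j}$, which then satisfies $\mathbf{k} \in \mathcal{R}[0,t]$, is unisigned, and has $\mathbf{k}_i \neq 0$. These are precisely the three requirements in the definition of $x_i$ being strictly herdable. Since $x_i$ was an arbitrary element of $\mathcal{S}$, this establishes that every element of $\mathcal{S}$ is strictly herdable.

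There is essentially no obstacle to overcome: the result follows directly from unwinding the definitions and invoking \thref{lem:rceq}. The only subtleties worth stating explicitly are, first, that strict herdability is a per-state property, so the states in $\mathcal{S}$ may be treated one at a time rather than simultaneously, and second, that the definition asks only for a unisigned witness rather than a positive one, so no sign normalization is needed. Should a positive witness be preferred, one could replace $\mathbf{k}$ by $-\mathbf{k}$, which remains in the subspace $\mathrm{range}(\mathcal{C})$ and flips a negative unisigned column into a positive one.
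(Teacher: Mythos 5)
Your proposal is correct and follows essentially the same route as the paper: take the hypothesized unisigned column of $\mathcal{C}$ as the witness, note it lies in $\mathrm{range}(\mathcal{C})=\mathcal{R}[0,t]$ by \thref{lem:rceq}, and conclude from the definition of strict herdability. The paper additionally performs the sign normalization (replacing $(\mathcal{C})_{:,j}$ by $-(\mathcal{C})_{:,j}$ when the $i$-th entry is negative) as part of the main argument, which you correctly observe is optional under the literal definition and relegate to a remark.
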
 
	 	\begin{proof}
	 		By the definition of $\mathcal{S}$, for node $x_{is}\in\mathcal{S}$ there exists a $j_{s}$ such that $(\mathcal{C})_{i_{s},j_{s}}\neq0$ and each non-zero element of $(\mathcal{C})_{:,j_{s}}$, has the same sign. If $(\mathcal{C})_{i_{s},j_{s}}>0$, then  $(\mathcal{C})_{:,j_{s}}\in\mathrm{range}(\mathcal{C})$ and the node $x_{is}$ is strictly herdable.  Alternatively if $(\mathcal{C})_{i_{s},j_{s}}<0$, then the positive unisigned vector $-(\mathcal{C})_{:,j_{s}}\in\mathrm{range}(\mathcal{C})$ and the node $x_{is}$ is strictly herdable.
	 	
	 	\end{proof}
	 	Let $\mathcal{D}=\{x_{1},x_{2},\dots,x_{n}\}\setminus\mathcal{S}$. If $x_l\in\mathcal{D}$ then for all $j$ such that $(\mathcal{C})_{l,j}\neq 0$ the column vector $(\mathcal{C})_{:,j}$ is balanced. This holds because if this were not true then $x_{l}$ would be in $S$. We introduce the following definition to classify states in $\mathcal{D}$.  \begin{definition} The state $x_z$ \textbf{balances} state $x_l$ at $j$ if it has a different sign than $x_l$ in the column $(\mathcal{C})_{:,j}$ and \textbf{favors} state $x_l$ at $j$ if it has the same sign as $x_l$ in the column $(\mathcal{C})_{:,j}$. \end{definition}
	 	\begin{lemma}\thlabel{lem:shat}
	 		If $\forall x_l\in\mathcal{D}$ there exists a $j$ such that $x_l$ is balanced only by strictly herdable nodes at $j$ then $x_l$ is strictly herdable. 
	 	\end{lemma}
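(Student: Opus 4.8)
The plan is to prove strict herdability of a fixed $x_l\in\mathcal{D}$ by exhibiting an explicit positive unisigned vector in $\mathrm{range}(\mathcal{C})$ whose $l$-th entry is nonzero, which is exactly the definition of strict herdability once we recall from \thref{lem:rceq} that $\mathcal{R}[0,t]=\mathrm{range}(\mathcal{C})$. First I would invoke the hypothesis to select a column index $j$ with $(\mathcal{C})_{l,j}\neq 0$ at which $x_l$ is balanced only by strictly herdable nodes, and set $\mathbf{c}=(\mathcal{C})_{:,j}$. Replacing $\mathbf{c}$ by $-\mathbf{c}$ if necessary (which stays in $\mathrm{range}(\mathcal{C})$ and leaves the balancing/favoring partition of the indices unchanged), I may assume $\mathbf{c}_l>0$. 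Since $x_l\in\mathcal{D}$, the column $\mathbf{c}$ is balanced, and after this normalization its strictly negative entries are precisely the nodes $x_z$ that balance $x_l$ at $j$.

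By hypothesis each such balancing node $x_z$ is strictly herdable, so there is a unisigned $\mathbf{k}^{(z)}\in\mathrm{range}(\mathcal{C})$ with $\mathbf{k}^{(z)}_z\neq 0$; negating it if needed, I take it to be positive unisigned, i.e. $\mathbf{k}^{(z)}\geq \mathbf{0}_n$ with $\mathbf{k}^{(z)}_z>0$. The key construction is then
\[
\mathbf{k}=\mathbf{c}+\sum_{z:\,\mathbf{c}_z<0}\beta_z\,\mathbf{k}^{(z)},\qquad \beta_z=\frac{-\mathbf{c}_z}{\mathbf{k}^{(z)}_z}>0,
\]
which lies in $\mathrm{range}(\mathcal{C})$ as a linear combination of elements of that subspace.

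The core of the argument is a monotonicity observation: because every $\mathbf{k}^{(z)}$ is nonnegative, adding the correction terms can only raise entries of $\mathbf{c}$, never lower them. Hence at every index $i$ with $\mathbf{c}_i\geq 0$ (the favoring nodes, the zero entries, and in particular $x_l$) we get $\mathbf{k}_i\geq\mathbf{c}_i\geq 0$, with $\mathbf{k}_l\geq\mathbf{c}_l>0$; while at each negative index $z$ the self-correction already gives $\mathbf{c}_z+\beta_z\mathbf{k}^{(z)}_z=0$ and the remaining nonnegative terms keep $\mathbf{k}_z\geq 0$. Thus $\mathbf{k}\geq\mathbf{0}_n$ with $\mathbf{k}_l>0$ is a positive unisigned vector in $\mathrm{range}(\mathcal{C})$, so $x_l$ is strictly herdable.

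The step I expect to require the most care is verifying that the corrections introduced to cancel one negative entry do not create a negative entry elsewhere or pull $x_l$ back down; this is exactly what the nonnegativity of the $\mathbf{k}^{(z)}$ guarantees, so the delicate point is really the normalization that makes each balancer's vector positive unisigned with a positive entry at its own index. I would also note that no circularity arises: the balancing nodes' strict herdability is taken as a hypothesis (obtained, for example, from \thref{lem:s} for nodes in $\mathcal{S}$, or from a prior application of this lemma), and the proof only consumes the resulting range vectors rather than re-deriving them.
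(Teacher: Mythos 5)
Your proof is correct and takes essentially the same route as the paper's: both add to the (sign-normalized) balanced column $\mathcal{C}_{:,j}$ a positive combination of the balancing nodes' positive unisigned range vectors so as to cancel the negative entries while keeping the entry at $x_l$ positive. If anything, your explicit weights $\beta_z=-\mathbf{c}_z/\mathbf{k}^{(z)}_z$ together with the monotonicity observation make rigorous the step the paper only asserts (``there exists a collection of weights $\alpha_s$''), so your write-up fills in the one detail the paper leaves implicit.
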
 
	 	\begin{proof}
	 		Let $\hat{\mathcal{S}}$ be the set of nodes which balance $x_l$ at $j$, which by assumption are all strictly herdable. By the definition of strictly herdable nodes, for each $x_s\in\hat{\mathcal{S}}$ there exists a vector $\mathbf{v}^{s}\in\mathrm{range}(\mathcal{C})$ such that $\mathbf{v}^{s}_{s}> 0$ and each non-zero element of the vector $\mathbf{v}^{s}$ has the same sign. Consider the set of vectors $S=\{\{\mathbf{v}_{s}\}_{x_s\in\hat{\mathcal{S}}},\mathbf{b}\}$ where $\mathbf{b}=\mathcal{C}_{:,j}$, the vector where $x_l$ is opposed by the elements of $\hat{\mathcal{S}}$. Then there exists a collection of weights $\alpha_{s}$ such that $\hat{\mathbf{s}}=\sum_{x_s\in\hat{\mathcal{S}}} \alpha_{s}\mathbf{v}^{s} + \mathrm{sgn}(b_{l})\mathbf{b}$ is a vector which is positive at $x_l$, at each node that favors $x_l$ at $j$ and at each node $x_s\in\hat{\mathcal{S}}$. As $\hat{\mathbf{s}}\in\mathrm{range}(\mathcal{C}),$ $l$ is strictly herdable. 
	 	\end{proof}
	 	Theorem \ref{th:suff} below lays the foundation for using Lemma \ref{lem:shat} to discuss system herdability.
	 	\begin{theorem}\thlabel{th:suff} 
	 		All states $x_i \in \{x_1,x_2,\dots,x_n\}$ are strictly herdable if and only if the system is completely herdable. 
	 	\end{theorem}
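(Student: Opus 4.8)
The plan is to prove both directions by reducing strict herdability to the existence of a common element-wise positive vector in $\mathrm{range}(\mathcal{C})$, which characterizes complete herdability via \thref{cor:krange}. The backward direction is immediate: if the system is completely herdable, \thref{cor:krange} supplies an element-wise positive $\mathbf{k}\in\mathrm{range}(\mathcal{C})$. This single $\mathbf{k}$ is unisigned and satisfies $\mathbf{k}_{i}\neq 0$ for every $i$, so it serves simultaneously as the witness required by the definition of strict herdability for each state $x_i$, and every state is strictly herdable.

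For the forward direction, suppose every state is strictly herdable. Then for each $i$ there is a unisigned vector $\mathbf{v}^{i}\in\mathcal{R}[0,t]=\mathrm{range}(\mathcal{C})$ (using \thref{lem:rceq}) with $\mathbf{v}^{i}_{i}\neq 0$. Since $\mathrm{range}(\mathcal{C})$ is a linear subspace, I may rescale each witness by $\mathrm{sgn}(\mathbf{v}^{i}_{i})$; the resulting vector remains in $\mathrm{range}(\mathcal{C})$, is still unisigned, and now has $\mathbf{v}^{i}_{i}>0$. Because a positive unisigned vector has all entries $\geq 0$, each normalized $\mathbf{v}^{i}$ is element-wise nonnegative with a strictly positive $i$-th entry.

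The key step is then to form $\mathbf{k}=\sum_{i=1}^{n}\mathbf{v}^{i}$. Closure of the subspace $\mathrm{range}(\mathcal{C})$ under addition gives $\mathbf{k}\in\mathrm{range}(\mathcal{C})$, and for each coordinate $j$ one has $\mathbf{k}_{j}=\sum_{i=1}^{n}\mathbf{v}^{i}_{j}\geq \mathbf{v}^{j}_{j}>0$, since every summand is nonnegative and the $j$-th summand is strictly positive at position $j$. Hence $\mathbf{k}$ is element-wise positive and lies in $\mathrm{range}(\mathcal{C})$, so by \thref{cor:krange} the system is completely herdable.

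The step I expect to carry the real weight is the summation argument: it is exactly the unisigned (\emph{strictly herdable}) hypothesis that rules out sign cancellation when the per-state witnesses are combined. Without the unisigned requirement a vector certifying herdability of $x_i$ could have negative entries elsewhere, and the sum could fail to be element-wise positive; strictness is what lets nonnegativity be preserved under addition. The remaining ingredients---sign normalization and closure of the range under linear combinations---are routine once the subspace structure of $\mathrm{range}(\mathcal{C})$ is invoked.
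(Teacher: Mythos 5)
Your proposal is correct and takes essentially the same route as the paper's own proof: normalize each state's unisigned witness to be element-wise nonnegative with a strictly positive $i$-th entry, sum them to obtain an element-wise positive vector in $\mathrm{range}(\mathcal{C})$, and conclude via \thref{cor:krange}; the converse direction likewise reads strict herdability of every state directly off the single positive vector. The only difference is one of presentation: you make explicit the sign-normalization step (rescaling by $\mathrm{sgn}(\mathbf{v}^{i}_{i})$) and the coordinatewise lower bound $\mathbf{k}_{j}\geq \mathbf{v}^{j}_{j}>0$, which the paper's terser proof leaves implicit.
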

	 	\begin{proof}
	 		(Sufficiency) As each state $x_i\in \{x_1,x_2,\dots,x_n\}$ is strictly herdable, there exists a vector $\mathbf{k}^{i}\in\mathrm{range}(\mathcal{C})$ which is element-wise non-negative and $\mathbf{k}^{i}_{i}>0$. Then the element wise positive $\mathbf{k}=\sum_{i} k^{i} \in\mathrm{range}(\mathcal{C})$ and the system is completely herdable. 
	 		
	 		\noindent(Necessity) As the system is completely herdable, there is an element-wise positive vector $\mathbf{k}\in\mathrm{range}(\mathcal{C})$. Then for each state $x_i \in \{x_1,x_2,\dots,x_n\}$ $\mathbf{k}_{i}>0$ and the other elements are nonnegative, so state $x_i$ is strictly herdable. 
	 	\end{proof}
 	These results provide a way to check for the herdability of a system efficiently from the controllability matrix, simply by inspecting the columns of $\mathcal{C}$. A similar set of results hold for the columns of the controllability grammian $W_{c}$ though they are not described here. As will be seen shortly, the controllability matrix has the advantage of being related to the underlying graph structure of the network, which presents further opportunities for determining system herdability. 

	 	\section{Characterizing Dynamical Systems via Graphs}\label{sec:pre}
	 	This section presents a characterization of a dynamical system based on the signed, directed graph which describes the interaction structure of the dynamical system. This characterization will allow an exploration of the relationship between the ability to control a system and the structure of the interactions between the states as well as the interaction between the inputs and the states of the system. {In this section, it is shown that a property known as structural balance is sufficient to ensure that all systems with the same sign structure have a controllability matrix with the same sign pattern.}
	 	
	 	A linear system can be represented by three graphs, each of which contains different levels of information about the interactions between the states and inputs. The first is an unweighted, unsigned directed graph $\G=(\mathcal{V},\mathcal{E})$, where $\mathcal{V}$ is the vertex (equivalently node) set and $\mathcal{E}$ is the edge set. This graph is commonly used in the study of structural controllability to represent a class of systems which share the same structure. The second graph is a signed graph $\G^{s}=(\mathcal{V},\mathcal{E},s(\cdot))$ where $s(\cdot)$ accepts an edge and returns a label in $\{+1,-1\}$, which is the sign of the edge. This signed graph represents a class of systems whose edge weights have the same sign pattern. Similarly this representation was used in the study of sign controllability to represent a class of systems which share the same sign structure.  This graph is used when considering structural balance \cite{harary1953notion,harary2005structural}. 
	 	The third graph is a weighted graph $\G^{w}=(\mathcal{V},\mathcal{E},w(\cdot))$ where $w(\cdot)$ accepts an edge and returns a weight in $\mathbb{R}$. The weighted graph is the representation of a single system. 
	 	
	 	As will be seen later, the weighted graph $\G^{w}$ can be directly related to the controllability matrix $\mathcal{C}$ and therefore the controllability properties of the system. The following sections focus on the interplay between $\G^{s}$ and $\G^{w}$, in that the presented structural results are cases where the results for the herdability of a system based on the weighted $\G^{w}$ can be extended to all signed graphs with the same sign structure $\G^{s}$ regardless of the weights of the edges in $\G^{w}$, a notion similar to strong structural controllability and sign controllability. This notion is called sign herdability. 
	 	
	 	\begin{definition}
	 		A system is completely sign herdable if all systems which share the same sign structure $\G^{s}$ are completely herdable.	
	 	\end{definition}
	 {In order to show sign herdability, we will consider when all systems which share the same sign structure $\G^{s}$ give rise to controllability matrix $\C$ or grammian $\W$ which have the same sign pattern.   \begin{definition}
	 		A matrix is sign definite if all systems which share the same sign structure $\G^{s}$ give rise to the same sign structure for that matrix.	
	 \end{definition}}
	 	The formal definition of the graphs follows. The set of vertices satisfies $\mathcal{V}=\mathcal{V}_{x}\cup\mathcal{V}_{u}, \ \mathcal{V}_{x}\cap\mathcal{V}_{u}=\emptyset,$ where $\mathcal{V}_{x}=\{v_{x1},v_{x2},\dots,v_{xn}\}$ is a set of vertices representing the states of the system and $\mathcal{V}_{u}=\{v_{u1},v_{u2},\dots,v_{um}\}$ is a set of nodes representing the inputs to the system. An arbitrary element of $\mathcal{V}$ will be referred to by $v_{i}$ for some index $i$, as will arbitrary elements $v_{xi}\in\mathcal{V}_{x}$ and $v_{ui}\in \mathcal{V}_{u}$. The state $x_i$ will now be interchangeably referred to by the node $v_{xi}$ as will the input $j$ and the node $v_{uj}$.  
	 	
	 	The edge set satisfies $\mathcal{E}=\mathcal{E}_{x}\cup\mathcal{E}_{u}$ where the edges in $\mathcal{E}_{x}$ represent interactions between states of the system, while the edges in $\mathcal{E}_{u}$ represent interactions between the inputs and the states. Denote the directed edge from $v_{i}$ to $v_{j}$ as $(v_{i},v_{j})$. Then $(v_{xi},v_{xj})\in\mathcal{E}_{x} \Leftrightarrow A(j,i)\neq 0$ and $(v_{ui},v_{xj})\in\mathcal{E}_{u} \Leftrightarrow B(j,i)\neq 0$. An arbitrary element of $\mathcal{E}$ will be referred to by $e_{i}$ for some index $i$. By partitioning the node and edges sets, it is possible to define the state subgraph $\G_{x}=(\mathcal{V}_{x},\mathcal{E}_{x})$, which captures only interactions between states as well as the input subgraph $\G_{u}=(\mathcal{V},\mathcal{E}_{u})$ which captures interactions from the inputs to the states. Note that the input nodes do not interact with each other nor is it possible in the present discussion to have an edge of the form $(v_{xi},v_{uj})$, which would imply that the states influences the evolution of the input. 
	 	
	 	When considering the signed graph $\G^{s}$,  $s((v_{xi},v_{xj})) =\mathrm{sgn}(A(j,i))$ and $s((v_{ui},v_{xj})) =\mathrm{sgn}(B(j,i))$. Similarly for $\G^{w}$,   $w((v_{xi},v_{xj}))=A(j,i)$ and $w((v_{ui},v_{xj}))=B(j,i)$. 
	 	
	 	As an example, consider the system
	 	\begin{equation}\label{eq:ex}
	 	\dot{\mathbf{x}}=\begin{bmatrix}
	 	-1 & 0 & 0\\
	 	5 & 0 & 2 \\
	 	4 & -3&  0
	 	\end{bmatrix}+\begin{bmatrix} 0& -2\\
	 	2& 0\\
	 	0& 3
	 	\end{bmatrix}\mathbf{u}
	 	\end{equation} which is translated into $\G^{s}$ and $\G^{w}$ in Figure~\ref{fig:ex}.
	 	\begin{figure}[h]
	 		\centering

	 		\subfloat[]{\label{subfig:seq2}\begin{tikzpicture}[
	 			->,
	 			shorten >=2pt,
	 			auto,
	 			node distance=0.5cm,
	 			every text node part/.style={align=center},
	 			scale=0.25, every node/.style={scale=0.75}
	 			]
	 			\tikzstyle{every state}=[fill=none,draw=black,text=black]
	 			\node[
	 			state,
	 			] (n0) at(2,-1.5)
	 			{$u_{2}$};
	 			\node[
	 			state,
	 			] (n1) at(-12,-1.5)
	 			{$u_{1}$};
	 			\node[
	 			state,
	 			] (n2) at(0,-12)
	 			{$x_{3}$};
	 			\node[
	 			state,
	 			] (n3) at(-12,-12)
	 			{$x_{2}$};
	 			\node[
	 			state,
	 			] (n4) at(-6,-4.5)
	 			{$x_{1}$};
	 			\path (n0) edge [->] node [left,above] {$-$} (n4)
	 			(n0) edge [->] node [left] {$+$} (n2)
	 			(n1) edge [->] node [left] {$+$} (n3)
	 			(n4) edge [->,loop above] node [left,above] {$-$} (n4)
	 			(n2) edge [->,bend left] node [left,above] {$+$} (n3)
	 			(n3) edge [->,bend left] node [left,above] {$-$} (n2)
	 			(n4) edge [->] node [left,above] {$+$}  (n2)
	 			(n4) edge [->] node [left,above] {$+$}  (n3);
	 			\end{tikzpicture}} 
	 		\hspace{.5cm}
	 		\subfloat[]{\label{subfig:seq3}\begin{tikzpicture}[
	 			->,
	 			shorten >=2pt,
	 			auto,
	 			node distance=0.5cm,
	 			every text node part/.style={align=center},
	 			scale=0.25, every node/.style={scale=0.75}
	 			]
	 			\tikzstyle{every state}=[fill=none,draw=black,text=black]
	 			\node[
	 			state,
	 			] (n0) at(2,-1.5)
	 			{$u_{2}$};
	 			\node[
	 			state,
	 			] (n1) at(-12,-1.5)
	 			{$u_{1}$};
	 			\node[
	 			state,
	 			] (n2) at(0,-12)
	 			{$x_{3}$};
	 			\node[
	 			state,
	 			] (n3) at(-12,-12)
	 			{$x_{2}$};
	 			\node[
	 			state,
	 			] (n4) at(-6,-4.5)
	 			{$x_{1}$};
	 			\path (n0) edge [->] node [left,above] {$-2$} (n4)
	 			(n0) edge [->] node [left] {$3$} (n2)
	 			(n1) edge [->] node [left] {$2$} (n3)
	 			(n4) edge [->,loop above] node [left,above] {$-1$} (n4)
	 			(n2) edge [->,bend left] node [left,above] {$2$} (n3)
	 			(n3) edge [->,bend left] node [left,above] {$-3$} (n2)
	 			(n4) edge [->] node [left,above] {$5$}  (n2)
	 			(n4) edge [->] node [left,above] {$4$}  (n3);
	 			\end{tikzpicture}}
	 		
	 		\caption[Three System Representations]{The graphs of the system in Equation \eqref{eq:ex}. 
	 			~\ref{subfig:seq2}: $\G^{s}$ the signed graph. ~\ref{subfig:seq3}: $\G^{w}$ the weighted graph}
	 		\label{fig:ex}
	 	\end{figure}
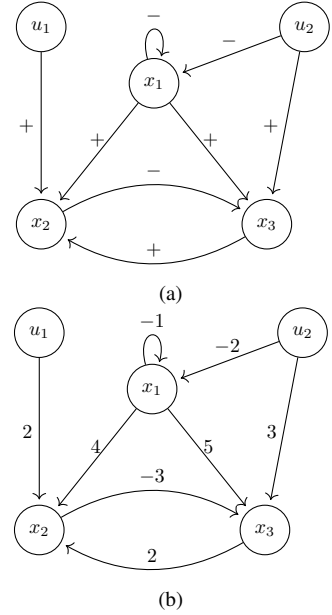
	 	
	 	A semi-walk from $v_{0}$ to $v_{k}$, $\pi_{s}(v_{0},v_{k})$, is a collection of nodes $v_{0},v_{1},v_{2}\dots, v_{k-1},v_{k}\in\mathcal{V}$, as well as $k$ edges which satisfy $(v_{i-1},v_{i})\in \mathcal{E} \vee (v_{i},v_{i-1})\in \mathcal{E}$. For convenience, the semi-walk can be represented by $\pi_{s}=v_{0},\hat{e}_{1},v_{1},\hat{e}_{2},v_{2}\dots, v_{k-1},\hat{e}_{k},v_{k}$ where $\hat{e}_{i}$ is the element of $\{(v_{i-1},v_{i}),(v_{i},v_{i-1})\}$ that is contained in $\mathcal{E}$. Like a walk, the sign of a semi-walk follows $s(\pi_{s})=\prod_{\hat{e}_{i}\in\pi_{s}} \ s(\hat{e}_{i})$ and the weight of a semi-walk  follows $w(\pi_{s})=\prod_{\hat{e}_{i}\in\pi_{s}} \ w(\hat{e}_{i})$. A semi-walk is a semi-path if the nodes of the semi-walk are distinct and a semi-walk is a semi-cycle if the first and last element of the semi-walk are the same.
	 	
	 	 \begin{definition}A graph is structurally balanced if all semi-cycles have a positive sign \cite{harary1953notion}.\end{definition} Structural balance has been shown to be related to the controllability of signed consensus dynamics evolving over a network \cite{alemzadeh2017controllability}. While no explicit relation between structural balance and herdability is shown in this paper, there is a relationship between structural balance and the controllability matrix $\C$ which will be explored in further depth later in the paper. 
	 	
	 	To begin classifying a linear system based on the signed graph $\G^{s}$, we define two basic types of sets. 
	 	Let $\nd^{j}$ be the set of nodes reachable from $v_{uj}$ via at least one negative walk of length $d$. Similarly $\pd^{j}$ is the set of nodes reachable from $v_{uj}$ through at least one positive walk of length $d$. If there is only one input to the system, the superscript will be dropped to refer to $\nd$ and $\pd$ instead of $\nd^{1}$ and $\pd^{1}$. {In Figure~\ref{fig:ex}, $\mathcal{N}_{1}^{1}=\{\emptyset\}$, $\mathcal{N}_{2}^{1}=\{x_{2}\}$,  $\mathcal{P}_{1}^{1}=\{x_{2}\}$,  $\mathcal{P}_{2}^{1}=\{\emptyset\}$} and so on.
	 	
	 	The sets $\pd^{j}$ and $\nd^{j}$ can provide sufficient information to determine the sign-herdability of a linear system. 
	 	Consider the total weight of positively signed walks from input $v_{uj}$ to node $v_{xi}$ with length $d$, \begin{equation}{\rho}_{j\rightarrow i,d}^{+}=\sum_{\pi\in \theta_{d}^{+}(v_{uj},v_{xi})}w(\pi),\end{equation}
	 	where $\theta_{d}^{+}(v_{uj},v_{xi})$ is the set of positive walks of length $d$ from $v_{uj}$ to $v_{xi}$. From the definition of $\pd^{j}$, it holds that ${\rho}_{j\rightarrow i,d}^{+}>0$ if $v_{xi}\in\pd^{j}$ and $0$ else. Similarly the total weight of negatively signed walks from input $v_{uj}$ to node $v_{xi}$ with length $d$ is \begin{equation}{\rho}_{j\rightarrow i,d}^{-}=\sum_{\pi\in \theta_{d}^{-}(v_{uj},v_{xi})}w(\pi),\end{equation} where $\theta_{d}^{-}(v_{uj},v_{xi})$ is the set of negative walks of length $d$ from $v_{uj}$ to $v_{xi}$ and it follows that ${\rho}_{j\rightarrow i,d}^{-}<0$ if $v_{xi}\in\nd^{j}$ and $0$ else. The weight of all walks from input $v_{uj}$ of length $d$ is \begin{equation}{\rho}_{j\rightarrow i,d}={\rho}_{j\rightarrow i,d}^{+}+{\rho}_{j\rightarrow i,d}^{-}. \end{equation}

	 	{We will be interested in cases where the sign of ${\rho}_{j\rightarrow i,d}$ is fixed across all possible weights on the edges, as this gives rise to a sign definite controllability matrix. This occurs if all paths at a certain distance are of the same sign.}
	 	\begin{proposition}\thlabel{lem:pdti}
	 		If $v_{xi}\in\pd^{j}\wedge v_{xi}\notin\nd^{j}$ then ${\rho}_{j\rightarrow i,d}>0$. {If $v_{xi}\in\nd^{j}\wedge v_{xi}\notin\pd^{j}$ then ${\rho}_{j\rightarrow i,d}<0$. Further if $v_{xk}\in \nd^{j}\cup\pd^{j} \wedge v_{xk}\notin \nd^{j}\cap\pd^{j}$ then ${\rho}_{j\rightarrow i,d}\neq 0$.}
	 	\end{proposition}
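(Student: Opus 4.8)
The plan is to read the proposition off the additive split $\rho_{j\rightarrow i,d}=\rho_{j\rightarrow i,d}^{+}+\rho_{j\rightarrow i,d}^{-}$ together with the two facts recorded immediately before the statement: $\rho_{j\rightarrow i,d}^{+}>0$ precisely when $v_{xi}\in\pd^{j}$ (and equals $0$ otherwise), while $\rho_{j\rightarrow i,d}^{-}<0$ precisely when $v_{xi}\in\nd^{j}$ (and equals $0$ otherwise). For the first claim, the hypothesis $v_{xi}\in\pd^{j}\wedge v_{xi}\notin\nd^{j}$ forces $\rho_{j\rightarrow i,d}^{+}>0$ and $\rho_{j\rightarrow i,d}^{-}=0$, so their sum is strictly positive. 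The second claim is the exact mirror image: $v_{xi}\in\nd^{j}\wedge v_{xi}\notin\pd^{j}$ gives $\rho_{j\rightarrow i,d}^{-}<0$ and $\rho_{j\rightarrow i,d}^{+}=0$, hence a strictly negative sum. Crucially, both inequalities are strict for \emph{every} weighting consistent with $\G^{s}$, which is exactly the sign-definiteness the surrounding discussion is after.

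The one point I would make explicit is why neither $\rho_{j\rightarrow i,d}^{+}$ nor $\rho_{j\rightarrow i,d}^{-}$ can cancel internally. The sign of a walk's weight coincides with its combinatorial sign: since $s(\hat{e})=\mathrm{sgn}(w(\hat{e}))$ on each edge, $\mathrm{sgn}(w(\pi))=\prod_{\hat{e}\in\pi}\mathrm{sgn}(w(\hat{e}))=\prod_{\hat{e}\in\pi}s(\hat{e})=s(\pi)$. Thus every $\pi\in\theta_{d}^{+}(v_{uj},v_{xi})$ contributes a strictly positive weight and every $\pi\in\theta_{d}^{-}(v_{uj},v_{xi})$ a strictly negative one, so each of the two sums is a sum of like-signed terms and vanishes only when its walk set is empty. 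This is precisely what underwrites the two quoted facts, so in the write-up I would either cite them or re-derive them in this single line.

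For the last assertion I would first rewrite the hypothesis $v_{xk}\in\nd^{j}\cup\pd^{j}\wedge v_{xk}\notin\nd^{j}\cap\pd^{j}$ as membership of $v_{xk}$ in the symmetric difference of $\pd^{j}$ and $\nd^{j}$, i.e.\ in exactly one of the two sets ($v_{xk}\in\pd^{j}\veebar v_{xk}\in\nd^{j}$). This is nothing but a disjunction of the hypotheses of the first two claims: if $v_{xk}\in\pd^{j}\setminus\nd^{j}$ then $\rho_{j\rightarrow k,d}>0$, and if $v_{xk}\in\nd^{j}\setminus\pd^{j}$ then $\rho_{j\rightarrow k,d}<0$; either way $\rho_{j\rightarrow k,d}\neq0$. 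I would also note that the subscript on $\rho$ in this final line should read $k$ rather than $i$ to match the node named in the hypothesis. There is no real obstacle here: once the decomposition and the no-cancellation observation are in hand, the whole statement is a short case analysis, and the only thing to watch is the bookkeeping of which summand vanishes under which membership condition while keeping the $\rho$ index aligned with the named node.
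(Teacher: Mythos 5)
Your proof is correct and takes essentially the same route as the paper: both rest on the decomposition ${\rho}_{j\rightarrow i,d}={\rho}_{j\rightarrow i,d}^{+}+{\rho}_{j\rightarrow i,d}^{-}$ together with the observation that each summand is a sum of like-signed walk weights, hence strictly signed exactly when the corresponding walk set is nonempty. The only cosmetic difference is in the final claim, which the paper proves by contradiction (assuming ${\rho}_{j\rightarrow k,d}=0$ and deducing $v_{xk}\in\pd^{j}\cap\nd^{j}$) whereas you reduce it directly to the first two cases via the symmetric difference; your remark that the index in the conclusion should be $k$ rather than $i$ is also a correct catch of a typo in the statement.
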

{
	\begin{proof}
		If $v_{xi}\in\pd^{j}\wedge v_{xi}\notin\nd^{j}$, then ${\rho}_{j\rightarrow i,d}^{-}=0$ and ${\rho}_{j\rightarrow i,d}$ is manifestly positive. The case $v_{xi}\in\nd^{j}\wedge v_{xi}\notin\pd^{j}$ follows similarly.
		
		To show that $v_{xk}\in \nd^{j}\cup\pd^{j} \wedge v_{xk}\notin \nd^{j}\cap\pd^{j}$ then ${\rho}_{j\rightarrow i,d}\neq 0$, suppose the contrary. Then
		\begin{align}
		{\rho}_{j\rightarrow k,d}&=0\\
		{\rho}_{j\rightarrow k,d}^{+}&+{\rho}_{j\rightarrow k,d}^{-}=0.
		\end{align}
		As $v_{xk}\in \nd^{j}\cup\pd^{j}$ it holds that
		\begin{align}
		{\rho}_{j\rightarrow k,d}^{+}>0, \ & {\rho}_{j\rightarrow k,d}^{-}<0
		\end{align} 
		which implies that
		\begin{align}
		v_{xi}\in\pd^{j}, \ & v_{xi}\in\nd^{j} \\
		v_{xi}\in\pd^{j}&\cap \nd^{j}
		\end{align} 
\end{proof} }
	 	
	 	It is possible to relate ${\rho}_{j\rightarrow i,d}$ with the system matrices $A,B$ and ultimately the controllability properties of the system.  
	 	Define a weighted adjacency matrix $\tilde{A}_{w}$ for $\G_{x}^{w}$, where $(\tilde{A}_{w})_{i,j}=w((v_{xj},v_{xi}))$ if $(v_{xj},v_{xi})\in\mathcal{E}_{x}$ and $(\tilde{A}_{w})_{i,j}=0$ if not. Define a weighted adjacency matrix $\tilde{B}_{w}$ for $\G_{u}^{w}$, where $(\tilde{B}_{w})_{i,j}=w((v_{uj},v_{xi}))$ if $(v_{uj},v_{xi})\in\mathcal{E}_{u}$ and $(\tilde{B}_{w})_{i,j}=0$ if not. Note that from the definition of the weight of an edge, $\tilde{A}_{w}=A$ and $\tilde{B}_{w}=B$. Then $(A^{d-1}B)_{i,j}$ is the sum of the weight of all walks of length $d$ from $v_{uj}$ to $v_{xi}$, {which can be used to show that $\mathcal{C}$ is determined by walks on $\G^{w}$ which have lengths from $1$ to $n$. } More formally:
	 	
	 	\begin{lemma}\thlabel{lem:cont} $\mathcal{C}_{i,(m(d-1)+j)}=\rho_{j\rightarrow i,d}$. \end{lemma}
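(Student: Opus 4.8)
The plan is to split the claim into two independent pieces: a purely bookkeeping identification of the correct column of $\mathcal{C}$, and a walk-counting identity for powers of the system matrices. First I would observe that, since $\mathcal{C}=[B,AB,A^{2}B,\dots,A^{n-1}B]$ is partitioned into $n$ blocks each of width $m$, the block $A^{d-1}B$ occupies columns $m(d-1)+1$ through $md$; hence its $j$-th column is precisely column $m(d-1)+j$ of $\mathcal{C}$, giving $\mathcal{C}_{i,m(d-1)+j}=(A^{d-1}B)_{i,j}$. This reduces the lemma to showing $(A^{d-1}B)_{i,j}=\rho_{j\rightarrow i,d}$.

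For the second piece I would use the fact, established just above, that $\tilde{A}_{w}=A$ and $\tilde{B}_{w}=B$, so that $A(i,k)$ is the weight of the state edge $(v_{xk},v_{xi})$ and $B(i,j)$ is the weight of the input edge $(v_{uj},v_{xi})$. The natural argument is induction on $d$. For the base case $d=1$, a walk of length $1$ from $v_{uj}$ to $v_{xi}$ is exactly the single input edge $(v_{uj},v_{xi})$, whose weight is $B(i,j)=(A^{0}B)_{i,j}$; summing over the (at most one) such walk gives $\rho_{j\rightarrow i,1}=(B)_{i,j}$, as required.

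For the inductive step I would assume $(A^{d-2}B)_{k,j}=\rho_{j\rightarrow k,d-1}$ for every $k$ and expand $(A^{d-1}B)_{i,j}=\sum_{k}A(i,k)(A^{d-2}B)_{k,j}$. The key structural observation is that every length-$d$ walk from $v_{uj}$ to $v_{xi}$ factors uniquely as a length-$(d-1)$ walk from $v_{uj}$ to some intermediate state $v_{xk}$ followed by the state edge $(v_{xk},v_{xi})$, and the weight of the walk is the product of the weight of its prefix with $A(i,k)$. Summing this factorization over all $k$ exactly reproduces $\sum_{k}A(i,k)\rho_{j\rightarrow k,d-1}$, which by the inductive hypothesis equals $(A^{d-1}B)_{i,j}$ and by construction equals $\rho_{j\rightarrow i,d}={\rho}_{j\rightarrow i,d}^{+}+{\rho}_{j\rightarrow i,d}^{-}$, the total signed weight of all length-$d$ walks (positive-weight and negative-weight contributions combine with sign).

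The main obstacle I anticipate is not analytical but bookkeeping: the adjacency matrices are defined with the edge orientation transposed relative to the matrix index order (an edge $(v_{xj},v_{xi})$ contributes to $A(i,j)$), and a walk of length $d$ originating at an input uses one input edge and $d-1$ state edges, which must be matched correctly to the single factor of $B$ and the $d-1$ factors of $A$ in $A^{d-1}B$. Keeping the endpoints and the direction of each edge consistent, so that the telescoping product of edge weights lands on the matrix entry $(A^{d-1}B)_{i,j}$ rather than on its transpose, is the step that requires the most care.
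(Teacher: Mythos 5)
Your proposal is correct and follows essentially the same route as the paper's own proof: an induction on $d$ with base case $B_{i,j}=\rho_{j\rightarrow i,1}$, the expansion $(A^{d-1}B)_{i,j}=\sum_{k}A_{i,k}(A^{d-2}B)_{k,j}$ combined with the factorization of each length-$d$ walk into a length-$(d-1)$ prefix and a final state edge, and the block-column identification $\mathcal{C}_{i,m(d-1)+j}=(A^{d-1}B)_{i,j}$. Your explicit attention to the transposed indexing convention ($w((v_{xk},v_{xi}))=A_{i,k}$) is a detail the paper leaves implicit, but it does not change the argument.
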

	 	{
	 		\begin{proof}
	 			We first show that $(A^{d-1}B)_{i,j}={\rho}_{j\rightarrow i,d},$ via proof by induction on $d$. 
	 			Consider the case of $d=1$. By the definition of the weight of an edge: 
	 			\begin{equation}
	 			B_{i,j}={\rho}_{j\rightarrow i,1}.
	 			\end{equation}
	 			Consider the weight of all walks of length $d$ from an input $v_{uj}$ to a state node $v_{xi}$. By assumption, $(A^{d-2}B)_{i,j}={\rho}_{j\rightarrow i,d-1}$. As $A^{d-1}B=AA^{d-2}B$, it follows that \begin{equation}
	 			(A^{d-1}B)_{i,j}=\sum_{k=1}^{n} (A)_{i,k}{\rho}_{j\rightarrow k,d-1}.
	 			\end{equation} As a walk of length $d$ is the concatenation of a walk of length $d-1$ and a walk of length $1$, it follows from the definition of the weight of a walk that 
	 			\begin{equation}
	 			\sum_{k=1}^{n} A_{i,k}{\rho}_{j\rightarrow k,d-1}={\rho}_{j\rightarrow i,d}.
	 			\end{equation}
	 			The result follows directly from the fact that $\mathcal{C}_{i,(m(d-1)+j)}=(A^{d-1}B)_{i,j}.$
	 	\end{proof}
 	The definition of the matrix exponential, $e^{A}=\sum_{z=0}^{\infty} \frac{1}{k!} A^{k}$, and the form of the controllability grammian shows that the controllability grammian is related to all possible paths in the underlying graph of $A$. }

	 		{To conclude this section, conditions on the graph which produce sign definite controllability matrix and controllability grammian are discussed.	 	 }
	 
	 	\begin{theorem}\thlabel{th:signcon}
	 		If $\forall x_i\in\{x_1,x_2,\dots,x_n\}$ it holds that for each $d$ and $j$ such that $\mathcal{C}_{i,m*(d-1)+j} \neq0$, $x_i$ satisfies $v_{xi}\in\pd^{j}\wedge v_{xi}\notin\nd^{j}$  or $v_{xi}\in\nd^{j}\wedge v_{xi}\notin\pd^{j}$, then {the controllability matrix is sign definite}.
	 	\end{theorem}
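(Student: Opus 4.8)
The plan is to read the hypothesis structurally and reduce the claim to a position-by-position analysis of $\C$. Since sign definiteness asks that every system sharing the sign structure $\G^{s}$ produce the same sign pattern for $\C$, and since by \thref{lem:cont} the entry in position $(i,m(d-1)+j)$ equals $\rho_{j\rightarrow i,d}$, it suffices to show that for each triple $(i,d,j)$ the sign $\mathrm{sgn}(\rho_{j\rightarrow i,d})$ is the same for every admissible assignment of edge weights. The essential observation I would make first is that the reachability sets $\pd^{j}$ and $\nd^{j}$ are defined purely in terms of positive and negative walks in the signed graph $\G^{s}$, and therefore do not change as the edge magnitudes vary within a fixed sign class.

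First I would partition the positions of $\C$ according to the membership of $v_{xi}$ in $\pd^{j}$ and $\nd^{j}$. There are four cases. If $v_{xi}\notin\pd^{j}\cup\nd^{j}$, then there are no walks of length $d$ of either sign, so $\rho_{j\rightarrow i,d}^{+}=\rho_{j\rightarrow i,d}^{-}=0$ and the entry is identically zero. If $v_{xi}\in\pd^{j}\wedge v_{xi}\notin\nd^{j}$, or $v_{xi}\in\nd^{j}\wedge v_{xi}\notin\pd^{j}$, then by \thref{lem:pdti} the entry is strictly positive, respectively strictly negative, for every admissible weighting; moreover the final clause of \thref{lem:pdti} guarantees that such an entry is never zero. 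The only remaining case is $v_{xi}\in\pd^{j}\cap\nd^{j}$, in which $\rho_{j\rightarrow i,d}^{+}>0$ and $\rho_{j\rightarrow i,d}^{-}<0$, so the sign of the sum genuinely depends on the magnitudes; this is the case that must be excluded.

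The crux of the argument is that the hypothesis rules out this last case. I would read the condition $\C_{i,m(d-1)+j}\neq 0$ structurally as $v_{xi}\in\pd^{j}\cup\nd^{j}$, which is consistent precisely because, by \thref{lem:pdti}, membership in exactly one of the two sets already forces the entry to be nonzero. Under this reading the assumption states that every position with $v_{xi}\in\pd^{j}\cup\nd^{j}$ satisfies $v_{xi}\in\pd^{j}\wedge v_{xi}\notin\nd^{j}$ or $v_{xi}\in\nd^{j}\wedge v_{xi}\notin\pd^{j}$, which is exactly the statement that no relevant position has $v_{xi}\in\pd^{j}\cap\nd^{j}$. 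Hence the ambiguous case never arises.

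Having excluded the ambiguous case, every position of $\C$ falls into one of the three weight-independent cases: structurally zero, always positive, or always negative, with the classification depending only on $\pd^{j}$ and $\nd^{j}$ and hence only on $\G^{s}$. Since the sign of each entry is then the same for all systems in the sign class, the sign pattern of $\C$ is determined by $\G^{s}$ alone, which is precisely the assertion that $\C$ is sign definite. The step I expect to require the most care is the structural reading of the nonzero hypothesis: one must argue that a potentially ambiguous position $v_{xi}\in\pd^{j}\cap\nd^{j}$, which is generically nonzero, cannot be disguised by a fortuitous cancellation of magnitudes, and it is cleanest to invoke the final clause of \thref{lem:pdti} to tie the algebraic condition ``nonzero entry'' back to the purely graph-theoretic membership condition.
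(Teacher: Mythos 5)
Your proposal is correct and follows essentially the same route as the paper's proof: both reduce sign definiteness to an entrywise analysis via \thref{lem:cont} and then invoke \thref{lem:pdti} to conclude that, once the mixed case $v_{xi}\in\pd^{j}\cap\nd^{j}$ is excluded by the hypothesis, every entry of $\mathcal{C}$ is structurally zero, always positive, or always negative, independent of the edge magnitudes. Your explicit structural reading of the condition $\mathcal{C}_{i,m(d-1)+j}\neq 0$ as membership in $\pd^{j}\cup\nd^{j}$ is a careful refinement that the paper's one-sentence proof leaves implicit, and it is the reading needed to rule out the scenario where an exact cancellation hides a position whose sign would vary across the sign class.
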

	 	\begin{proof}
	 		If the condition of the theorem holds, then every non-zero element of $\mathcal{C}$ is associated only with paths of the same sign {by \thref{lem:pdti}} and as such will have always have the same sign no matter the weights on the graphs. 
	 	\end{proof}
	 	If the underlying graph of the system satisfies \thref{th:signcon}, the controllability matrix $\C$ will always have the same sign pattern, however the condition of the theorem {would require checking all paths of length $1$ to $n$ between input and state nodes. }  
	 	It is possible to show a stronger condition more easily. Consider the following:
	 	
	 	\begin{theorem}\thlabel{th:struc}
	 		If the system graph is structurally balanced, then {the controllability matrix is sign definite.} 
	 	\end{theorem}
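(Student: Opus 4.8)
The plan is to reduce the claim to the already-established sufficient condition of \thref{th:signcon}: it suffices to show that structural balance forces every directed walk from an input node $v_{uj}$ to a state node $v_{xi}$ to have a sign that depends only on the pair $(v_{uj},v_{xi})$, and not on the length of the walk nor on which walk is chosen. Once that is shown, no state node can lie in both $\pd^{j}$ and $\nd^{j}$, so every nonzero entry $\C_{i,m(d-1)+j}=\rho_{j\rightarrow i,d}$ (by \thref{lem:cont}) is associated only with walks of a single sign and hence, by \thref{lem:pdti} and \thref{th:signcon}, has a sign that is fixed across all admissible edge weights, which is exactly sign definiteness of $\C$.

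First I would invoke the structure theorem for balanced signed graphs: because every semi-cycle of $\G^{s}$ is positive, the vertex set admits a two-coloring, encoded by a gauge function $\sigma:\mathcal{V}\rightarrow\{+1,-1\}$, such that every edge $e=(v_{p},v_{q})$ satisfies $s(e)=\sigma(v_{p})\sigma(v_{q})$. Since each directed edge is also an edge of the underlying undirected signed graph on which $\sigma$ is defined, this factorization holds regardless of edge orientation; in particular any self-loop is forced to be positive, as $\sigma(v)^{2}=+1$. If $\G^{s}$ is disconnected I would define $\sigma$ independently on each connected component, which is harmless since every walk stays within a single component.

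The key computation is then a telescoping argument. For a directed walk $\pi=v_{0},v_{1},\dots,v_{k}$ the sign is $s(\pi)=\prod_{t=1}^{k}s((v_{t-1},v_{t}))=\prod_{t=1}^{k}\sigma(v_{t-1})\sigma(v_{t})=\sigma(v_{0})\sigma(v_{k})$, since each intermediate factor $\sigma(v_{t})^{2}$ equals $1$. Thus the sign of a walk depends only on its endpoints, not on its length or internal structure. Applying this with $v_{0}=v_{uj}$ and $v_{k}=v_{xi}$ shows that all walks from $v_{uj}$ to $v_{xi}$ share the common sign $\sigma(v_{uj})\sigma(v_{xi})$, so $v_{xi}$ lies in exactly one of $\pd^{j},\nd^{j}$ whenever it is reachable. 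This is precisely the hypothesis of \thref{th:signcon}, which then yields the conclusion.

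The step requiring the most care is the passage from the semi-cycle definition of structural balance to the gauge $\sigma$, together with the observation that $\sigma$ governs directed edges even though balance is phrased in terms of undirected semi-cycles. The resolution is that $\sigma$ is a property of the underlying signed graph and the sign of a directed edge coincides with the sign of the corresponding undirected edge, so the telescoping identity is insensitive to orientation; everything after that is bookkeeping through \thref{lem:cont}, \thref{lem:pdti} and \thref{th:signcon}.
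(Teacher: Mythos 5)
Your proof is correct, but it reaches the key fact by a genuinely different mechanism than the paper. The paper proves that no state node can lie in both $\pd^{j}$ and $\nd^{j}$ by a direct contradiction straight from the definition of balance: if a positive and a negative walk of the same length both ran from $v_{uj}$ to $v_{xi}$, their concatenation would form a semi-cycle with an odd number of negative edges, i.e., a negative semi-cycle, contradicting structural balance (the paper also cites Cartwright--Harary for the fact that balance forces all paths between a fixed pair of nodes to share a sign). You instead invoke the harder, structural direction of Harary's balance theorem --- existence of a gauge $\sigma:\mathcal{V}\rightarrow\{+1,-1\}$ with $s(e)=\sigma(v_{p})\sigma(v_{q})$ --- and derive the same claim by telescoping, then route the conclusion through \thref{lem:cont}, \thref{lem:pdti} and \thref{th:signcon}. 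What your route buys: it is constructive and yields a strictly stronger, more explicit statement, namely that every nonzero entry of $\C$ has sign exactly $\sigma(v_{uj})\sigma(v_{xi})$, independent not only of the edge weights but also of the walk length $d$, so the whole sign pattern of $\C$ is encoded by a single vertex labeling; it also makes transparent that no path cancellation can occur, so $\C_{i,m(d-1)+j}\neq 0$ precisely when some walk of length $d$ exists, which is itself a weight-independent condition. What the paper's route buys: it is elementary and self-contained, needing only the definition of a semi-cycle rather than importing the structure theorem (whose proof is essentially the paper's argument applied globally). Your attention to the subtle points --- that signs live on directed edges while balance is a property of the underlying undirected structure, that disconnected components can be gauged independently, and that self-loops are forced positive --- is warranted, and each is handled correctly.
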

	 	\begin{proof}
	 		 {Structural balance implies that the paths between an pair of nodes have the same sign \cite{cartwright1956structural}, which implies that $v_{xk}\in \nd^{j}\cup\pd^{j} \wedge v_{xk}\notin \nd^{j}\cap\pd^{j}$ for every non-zero element of $\C$ and $(A^{d-1}B)_{kj}$ must have a unique sign. If not,} there exists an $x_i$ such that there is a $d$ and $j$ where $v_{xi}\in\pd^{j}\wedge v_{xi}\in\nd^{j}$. This implies there are one or more positive paths of length $d$ from $j$ and one or more negative paths of length $d$ from $j$. Without loss of generality, consider one positive path and one negative path from input $u_j$ to node $x_i$. These paths form a semi-cycle in the graph. One of the paths is negative and must have an odd number of negative edges. The other is positive and must have an even number of negative edges. As such the semi-cycle must have an odd number of negative edges, i.e. the semi-cycle must have negative weight, which implies that the graph is not structurally balanced.  
	 	\end{proof}
	 	
	 	{\thref{th:struc} also shows that if the graph is structurally balanced then there can be no path cancellation such that ${\rho}_{j\rightarrow i,d}\neq 0$ by \thref{lem:pdti}. However structural balance is not a necessary condition for a sign definite controllability matrix. } Consider the graph in Figure \ref{fig:further}. Consider the node $v_{x2}$, which is in $\mathcal{N}_{1}$ and {$\mathcal{P}_{2}$} but not in $\mathcal{P}_{1}$ and {$\mathcal{N}_{2}$}. As such it satisfies the condition of \thref{th:signcon} and the controllability matrix $\C$ will always have the same sign, however the graph as a whole is not structurally balanced. As can be seen, structural balance ignores the lengths of the paths which connect an input to a state node which are important for determining herdability. 
	 	
	 	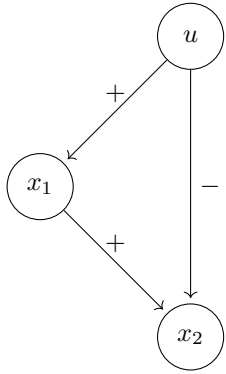
\begin{figure}[h]
	 		\centering
	 		\begin{tikzpicture}[
	 		->,
	 		shorten >=2pt,
	 		auto,
	 		node distance=0.5cm,
	 		every text node part/.style={align=center},
	 		scale=0.25, every node/.style={scale=1}
	 		]
	 		\tikzstyle{every state}=[fill=none,draw=black,text=black]
	 		\node[
	 		state,
	 		] (n0) at(0,0)
	 		{$u$};
	 		\node[
	 		state,
	 		] (n1) at(-8,-8)
	 		{$x_{1}$};
	 		\node[
	 		state,
	 		] (n2) at(0,-16)
	 		{$x_{2}$};
	 		\path (n0) edge [->] node [left,above] {$+$}(n1)
	 		(n0) edge [->]  node {$-$} (n2)
	 		(n1) edge [->] node[left,above] {{$+$}} (n2);
	 		\end{tikzpicture}
	 		\caption[Structural Balance and Sign Herdability]{An example of a graph which will always have the same sign pattern for the controllability matrix but is not structurally balanced.}
	 		\label{fig:further}
	 	\end{figure}
	 	
	 	However structural balance can be determined in linear time\cite{MAYBEE198399}, which implies that \thref{th:struc} makes it possible to characterize the sign herdability of a system from the controllability matrix with little extra computational cost.
	 	 	
	 	\section{Graphical Conditions for Complete Herdability}\label{sec:graph_con}
	 	This section considers the relationship between graph structure and herdability in a number of different contexts. First {herdability is shown for the case where the underlying graph is a directed outbranching, then } a necessary condition for complete herdability is examined before developing a number of sufficient conditions for complete herdability. These sufficient conditions extend the characterization of Section \ref{sec:class}.  
	 	\subsection{Directed Out-branchings}\label{sec:outb}
	 	 
	 	This section {considers herdability} in the special case of a system which has an underlying graph which is a rooted out-branching. {In such systems, it is possible to completely characterize the set of nodes that can be herded together, $\mathbb{H}_{u}$.} 
	 	
	 	A directed graph, $\hat{\mathcal{G}}=(\hat{\mathcal{V}},\hat{\mathcal{E}})$ is a rooted out-branching if it has a root node $v_{i}\in\hat{\mathcal{V}}$ such that for every other node $v_{j}\in\hat{\mathcal{V}}$ there is a single directed walk from $v_{i}$ to $v_{j}$. The case considered here is that of  a single input, input rooted out-branching, which means that every node $v_{xi}\in\hat{\mathcal{V}}_{x}$ has a single in-bound walk from the single input $v_{u}$. The unique walk from $v_{u}$ to $v_{xi}$ in the input-rooted out-branching will be referred to as $\pi_{t}(v_{u},v_{xi})$. Consider the maximum walk length between $v_{u}$ and a state node, which is  \begin{equation}d_{\max}=\max_{v_{xi}\in\hat{\mathcal{V}}_{x}} \mathrm{len}(\pi_{t}(v_{u},v_{xi})).\end{equation}
	 	  
	 	\begin{theorem}\thlabel{th:outtree}
	 		In an input rooted, out-branching, $\mathbb{H}_{u}$ follows 
	 		\begin{equation}
	 		\mathbb{H}_{u}=\bigcup_{d=1}^{d_{\max}} \mathcal{X}_{d},
	 		\end{equation} 
	 		where $\mathcal{X}_{d}\in\{\pd,\nd,\emptyset\}$. 
	 	\end{theorem}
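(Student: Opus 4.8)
The plan is to specialize the column/walk correspondence of \thref{lem:cont} to the single-input, out-branching setting and then read off herdability directly from \thref{lem:herdset}. Since $m=1$, \thref{lem:cont} gives $\mathcal{C}_{i,d}=\rho_{1\rightarrow i,d}$, the total weight of length-$d$ walks from $v_{u}$ to $v_{xi}$. In an input-rooted out-branching each state node $v_{xi}$ is reached by the \emph{unique} walk $\pi_{t}(v_{u},v_{xi})$, whose length I write $d_{i}=\mathrm{len}(\pi_{t}(v_{u},v_{xi}))$; there are no other walks, because the branching is acyclic and every non-root node has in-degree one. Consequently $\mathcal{C}_{i,d}=w(\pi_{t}(v_{u},v_{xi}))\neq 0$ when $d=d_{i}$ and $\mathcal{C}_{i,d}=0$ otherwise, so every row of $\mathcal{C}$ has exactly one nonzero entry. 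Equivalently, the columns $\mathcal{C}_{:,1},\dots,\mathcal{C}_{:,d_{\max}}$ have pairwise disjoint supports, the support of $\mathcal{C}_{:,d}$ being $\pd\cup\nd$, with strictly positive entries on $\pd$ and strictly negative entries on $\nd$ (and $\pd\cap\nd=\emptyset$ by uniqueness of the walk).

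With this structure in hand, the next step is to parametrize $\mathrm{range}(\mathcal{C})$. Every $\mathbf{k}\in\mathrm{range}(\mathcal{C})$ can be written $\mathbf{k}=\sum_{d=1}^{d_{\max}}\alpha_{d}\,\mathcal{C}_{:,d}$, and because the supports are disjoint the $i$-th component collapses to the single term $\mathbf{k}_{i}=\alpha_{d_{i}}(\mathcal{C}_{:,d_{i}})_{i}$. Hence $\mathrm{sgn}(\mathbf{k}_{i})=\mathrm{sgn}(\alpha_{d_{i}})\,\mathrm{sgn}\!\left((\mathcal{C}_{:,d_{i}})_{i}\right)$: at a fixed distance $d$, choosing $\alpha_{d}>0$ makes $\mathbf{k}$ positive on $\pd$ and negative on $\nd$, choosing $\alpha_{d}<0$ reverses both, and $\alpha_{d}=0$ annihilates the entire distance class. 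The key consequence is that the distance classes decouple: the sign pattern of $\mathbf{k}$ on $\pd\cup\nd$ is governed solely by $\alpha_{d}$ and is independent across $d$.

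Finally I invoke \thref{lem:herdset}: a subset $\mathcal{X}$ is herdable exactly when some $\mathbf{k}\in\mathrm{range}(\mathcal{C})$ is strictly positive on $\mathcal{X}$. By the componentwise formula, a given choice of $\{\alpha_{d}\}$ renders $\mathbf{k}$ positive on precisely the set $\bigcup_{d}\mathcal{X}_{d}$ with $\mathcal{X}_{d}=\pd$ if $\alpha_{d}>0$, $\mathcal{X}_{d}=\nd$ if $\alpha_{d}<0$, and $\mathcal{X}_{d}=\emptyset$ if $\alpha_{d}=0$; conversely every such union is realized by an admissible choice of signs for the $\alpha_{d}$. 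Since no single $\alpha_{d}$ can make $\mathbf{k}$ positive on a node of $\pd$ and a node of $\nd$ at once, a node of $\pd$ cannot be herded together with a node of $\nd$. Therefore the herdable sets are exactly the unions $\bigcup_{d=1}^{d_{\max}}\mathcal{X}_{d}$ with $\mathcal{X}_{d}\in\{\pd,\nd,\emptyset\}$, which is the claimed characterization of $\mathbb{H}_{u}$ (the maximal such set being obtained by taking a nonempty choice at every distance where one exists). The only genuine obstacle is the opening observation that the out-branching forces a single nonzero entry per row of $\mathcal{C}$; once the columns are seen to have disjoint supports, the herdability question decouples across distances and everything that follows is bookkeeping.
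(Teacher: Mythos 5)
Your proposal is correct and follows essentially the same route as the paper's proof: exploit the uniqueness of the walk from $v_{u}$ to each state node to show every row of $\mathcal{C}$ has a single nonzero entry whose sign is fixed by membership in $\pd$ versus $\nd$ (via \thref{lem:cont} and \thref{lem:pdti}), then parametrize $\mathrm{range}(\mathcal{C})$ by coefficients $\alpha_{d}\in\{1,-1,0\}$ per column and invoke \thref{lem:herdset}. If anything, you state the converse direction (that nodes of $\pd$ and $\nd$ at the same distance cannot be herded together, so no sets outside these unions arise) more explicitly than the paper, which leaves it implicit in the claim that each column ``uniquely determines'' the herdable nodes at that distance.
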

	 	\begin{proof}
	 		Consider the ability to herd a node $v_{xi}$ and assume that $\mathrm{len}(\pi_{t}(v_{u},v_{xi}))=d_{i}$. As there is only one walk from $v_{u}$ to $v_{xi}$ it holds that $(\mathcal{C})_{i,d}=0, \ ~\forall d\neq d_{i}\in\{1,2,\dots,d_{\max}\},$ and $(\mathcal{C})_{i,d_{i}}\neq0$. Further $v_{xi}$ is either in $\pd$ or in $\nd$ but can not be in both as there is only one path to $v_{xi}$. Then {by \thref{lem:pdti} and \thref{lem:cont}} if $v_{xi}$ is in $\pd[d_{i}]$, ${\rho}_{u\rightarrow i,d}>0$ and consequently $(\mathcal{C})_{i,d_{i}}>0$  or if $v_{xi}$ is in $\nd[d_{i}]$, ${\rho}_{u\rightarrow i,d}<0$ and $(\mathcal{C})_{i,d_{i}}<0$. 
	 		
	 		Then it follows that $(\mathcal{C})_{:, d_{i}}$ uniquely determines the ability to herd all nodes at distance $d_{i}$. Consider the choice of the constant $\alpha_{di}$. If $\alpha_{di}=1$ then $((\mathcal{C})_{:, d_{i}}\alpha_{di})_{i}>0, \ ~\forall i \text{ such that } v_{xi}\in\pd[d_{i}]$ and $\pd[d_{i}]$ is herdable by \thref{lem:herdset}. If $\alpha_{di}=-1$ then  $((\mathcal{C})_{:, d_{i}}\alpha_{di})_{j}>0, \ ~\forall i \text{ such that } v_{xi}\in\nd[d_{i}]$ and $\nd[d_{i}]$ is herdable by \thref{lem:herdset}. Finally if $\alpha_{di}=0$ then  $(\mathcal{C})_{:, d_{i}}\alpha_{di}=\mathbf{0}_{n}$ and no nodes are herded. Then by the appropriate choice of $\alpha_{di}$ the set of nodes that can be herded at distance $d_{i}$ from $u$, $\mathcal{X}_{di}$ must be one of $\{\pd,\nd,\emptyset\}$.
	 		
	 		Construct a vector $\pmb{\alpha}\in\mathbb{R}^{n}$ where $\forall d\in\{1,2,\dots,d_{\text{max}}\} $
	 		\begin{equation} \pmb{\alpha}_{d}=\begin{cases}1 \text{ \ \ so that } \mathcal{X}_d=\pd,\\ -1 \text{ so that } \mathcal{X}_d=\nd, \\ 0 \text{\ \ \ so that } \mathcal{X}_d=\emptyset,
	 		\end{cases}
	 		\end{equation} and where the remaining $n-d_{\max}$ elements are $0$. Then $\mathcal{C}\pmb{\alpha}$ shows the herdability of the set of nodes $\bigcup_{d=1}^{d_{\max}} \mathcal{X}_{d}$. 
	 	\end{proof}
	 	\begin{corollary}\thlabel{th:size}
	 		The maximal collection of nodes, $\mathbb{H}_{u}^{\ast}$, that can be herded in a input rooted out-branching satisfies \begin{equation}
	 		|\mathbb{H}_{u}^{\ast}|=\sum_{l=1}^{d_\text{max}}\max(|\mathcal{N}_{l}|,|\mathcal{P}_{l}|).
	 		\end{equation}
	 		
	 	\end{corollary}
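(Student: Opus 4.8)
The plan is to build directly on \thref{th:outtree}, which already establishes that $\mathbb{H}_{u}=\bigcup_{d=1}^{d_{\max}}\mathcal{X}_{d}$ with each $\mathcal{X}_{d}\in\{\pd[d],\nd[d],\emptyset\}$, and to recast the maximization over admissible herdable collections as an independent, per-distance choice. First I would record the two disjointness facts that make cardinalities additive. Since the graph is an input-rooted out-branching, every state node $v_{xi}$ lies at a unique distance $d_{i}$ from the root $v_{u}$, so the sets $\pd[d]\cup\nd[d]$ for distinct $d$ are pairwise disjoint; moreover, because there is a single walk to each node, $v_{xi}$ cannot be reached by both a positive and a negative walk, giving $\pd[d]\cap\nd[d]=\emptyset$.

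Next I would argue the independence of the per-distance choices. The key observation, already implicit in the proof of \thref{th:outtree}, is that with a single input ($m=1$) the column $\C_{:,d}=A^{d-1}B$ has a nonzero entry in row $i$ only when $d=d_{i}$. Hence for any scaling vector $\pmb{\alpha}$, the $i$-th component of $\C\pmb{\alpha}$ equals $\pmb{\alpha}_{d_{i}}(\C)_{i,d_{i}}$ and depends on $\pmb{\alpha}$ only through the single coordinate $\pmb{\alpha}_{d_{i}}$. Consequently the herded set decomposes across distances with no interference: choosing $\pmb{\alpha}_{d}=1$ herds exactly $\pd[d]$, choosing $\pmb{\alpha}_{d}=-1$ herds exactly $\nd[d]$, and $\pmb{\alpha}_{d}=0$ herds nothing at distance $d$, independently for each $d$.

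Combining these, maximizing $|\mathbb{H}_{u}|$ reduces to maximizing $|\mathcal{X}_{d}|$ separately at each distance $d$, and by the disjointness facts the total size is the sum of the per-distance maxima. Since $\mathcal{X}_{d}$ ranges over $\{\pd[d],\nd[d],\emptyset\}$, the best choice at distance $d$ contributes $\max(|\nd[d]|,|\pd[d]|)$ nodes (the empty choice is never strictly better), and summing over $d$ from $1$ to $d_{\max}$ yields
\begin{equation}
|\mathbb{H}_{u}^{\ast}|=\sum_{l=1}^{d_{\max}}\max(|\nd[l]|,|\pd[l]|).
\end{equation}
The explicit construction of $\pmb{\alpha}$ in \thref{th:outtree} realizes any such collection, so the maximum is attained and the bound is tight.

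The step I expect to require the most care is the independence argument: one must make explicit that the single-walk property of the out-branching is precisely what decouples the columns of $\C$, so that an optimal choice at one distance never constrains the choice at another. Once that decoupling is in hand, the result follows as a routine separable maximization over the $d_{\max}$ distances.
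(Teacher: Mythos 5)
Your proposal is correct and takes essentially the same approach as the paper, which states this corollary without a separate proof as an immediate consequence of \thref{th:outtree}: the unique-walk property of the out-branching makes the sets at distinct distances disjoint and the per-distance choices of $\mathcal{X}_{d}\in\{\mathcal{P}_{d},\mathcal{N}_{d},\emptyset\}$ independent, so the maximum is the separable sum $\sum_{l=1}^{d_{\max}}\max(|\mathcal{N}_{l}|,|\mathcal{P}_{l}|)$. Your explicit decoupling argument via the columns of $\mathcal{C}$ is precisely the content already established in the paper's proof of \thref{th:outtree}, so nothing is missing.
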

	 	\ \\
	 	
	 	In the case of an single input, input connectable,  directed out-branching where $\forall d \in \{1,2,\dots,d_{\text{max}}\}, \ \mathcal{N}_{d}=\emptyset \veebar \mathcal{P}_{d}=\emptyset$, \thref{th:size} shows that $|\mathbb{H}_{u}^{\ast}|=n$, or equivalently that the system is completely herdable. Figure~\ref{fig:choose} shows an example of selecting the set of nodes that can be herded in an input rooted, out-branching. 
	 	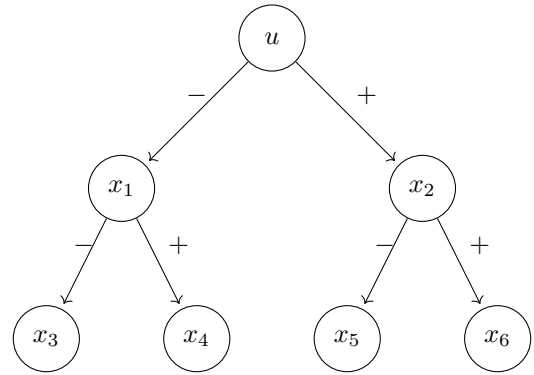
\begin{figure}[h]
	 		\centering
	 		\begin{tikzpicture}[
	 		->,
	 		shorten >=2pt,
	 		auto,
	 		node distance=0.5cm,
	 		every text node part/.style={align=center},
	 		scale=0.25, every node/.style={scale=1}
	 		]
	 		\tikzstyle{every state}=[fill=none,draw=black,text=black]
	 		\node[
	 		state,
	 		] (n0) at(0,0)
	 		{$u$};
	 		\node[
	 		state,
	 		] (n1) at(-8,-8)
	 		{$x_{1}$};
	 		\node[
	 		state,
	 		] (n2) at(8,-8)
	 		{$x_{2}$};
	 		\node[
	 		state,
	 		] (n3) at(4,-16)
	 		{$x_{5}$};
	 		\node[
	 		state,
	 		] (n4) at(12,-16)
	 		{$x_{6}$};
	 		\node[
	 		state,
	 		] (n5) at(-12,-16)
	 		{$x_{3}$};
	 		\node[
	 		state,
	 		] (n6) at(-4,-16)
	 		{$x_{4}$};
	 		\path (n0) edge [->] node [left,above] {$-$}(n1)
	 		(n0) edge [->]  node {$+$} (n2)
	 		(n2) edge [->] node[left,above] {$-$} (n3)
	 		(n2) edge [->]  node {$+$} (n4)
	 		(n1) edge [->] node[left,above] {$-$} (n5)
	 		(n1) edge [->]  node {$+$} (n6);
	 		\end{tikzpicture}
	 		\caption[Input Rooted Out-branching]{An example of an input rooted out-branching}
	 		\label{fig:choose}
	 	\end{figure}
	 	
	 	The graph in Figure~\ref{fig:choose} can be translated into the following class of systems:
	 	
	 	\begin{equation}
	 	\dot{\mathbf{x}}=\begin{bmatrix}
	 	0 & 0 & 0 & 0 & 0 & 0 \\ 
	 	0 & 0 & 0 & 0 & 0 & 0 \\ 
	 	-\alpha_{1} & 0 & 0 & 0 & 0 & 0 \\ 
	 	\alpha_{2} & 0 & 0 & 0 & 0 & 0 \\ 
	 	0 & -\alpha_{3} & 0 & 0 & 0 & 0 \\ 
	 	0 & \alpha_{4} & 0 & 0 & 0 & 0
	 	\end{bmatrix}\mathbf{x}+ \begin{bmatrix}
	 	-\beta_{1}  \\ 
	 	\beta_{2}  \\ 
	 	0  \\ 
	 	0  \\ 
	 	0  \\ 
	 	0 
	 	\end{bmatrix}\mathbf{u} 
	 	\end{equation}
	 	where $\alpha_{1},\alpha_{2},\alpha_{3},\alpha_{4},\beta_{1},\beta_{2}>0$. The system has a controllability matrix:
	 	\begin{equation*}
	 	\mathcal{C}=\begin{bmatrix}
	 	-\beta_{1} & 0 & 0 & 0 & 0 & 0 \\ 
	 	\beta_{2} & 0 & 0 & 0 & 0 & 0 \\ 
	 	0& \alpha_{1}\beta_{1} & 0 & 0 & 0 & 0 \\ 
	 	0& -\alpha_{2}\beta_{1} & 0 & 0 & 0 & 0 \\ 
	 	0 & -\alpha_{3}\beta_{2} & 0 & 0 & 0 & 0 \\ 
	 	0 & \alpha_{4}\beta_{2} &0  & 0 & 0 & 0
	 	\end{bmatrix} 
	 	\end{equation*}
	 	where
	 	\begin{equation}
	 	\mathrm{range}(\mathcal{C})=span\left(\left\{ \begin{bmatrix}
	 	-\beta_{1}  \\ 
	 	\beta_{2}  \\ 
	 	0  \\ 
	 	0  \\ 
	 	0  \\ 
	 	0 
	 	\end{bmatrix} , \begin{bmatrix}
	 	0  \\ 
	 	0  \\ 
	 	\alpha_{1}\beta_{1}  \\ 
	 	-\alpha_{2}\beta_{1}  \\ 
	 	-\alpha_{3}\beta_{2}  \\ 
	 	\alpha_{4}\beta_{2} 
	 	\end{bmatrix} \right\}\right)
	 	\end{equation}
	 	As such the possible sets of herded nodes are $\{1,3,6\},\{1,4,5\},\{2,3,6\},\{2,4,5\}$.
	 	
	 	The result of \thref{th:outtree} is similar in nature to the $k$-walk controllability theory\cite{Gao2014a}. The $k$-walk theory shows that for each $d\in\{1,2,\dots,d_{\text{max}}\}$ one {node} can be controlled. In the graph given in Figure~\ref{fig:choose}, the possible sets of nodes that can be controlled are $\{1,3\},\{1,4\},\{1,5\},\{1,6\},\{2,3\},\{2,4\}\{2,5\},\{2,6\}.$ As a consequence of the k-walk theory, the maximal collection of nodes that are controlled in a directed out-branching from input $v_{u}$, $\mathbb{C}_{u}^{\ast}$, satisfies \begin{equation}
	 	|\mathbb{C}_{u}^{\ast}|= d_{\text{max}}.
	 	\end{equation} In the case of herding a network, \thref{th:size} shows that the maximal collection of nodes, $\mathbb{H}_{u}^{\ast}$, will satisfy
	 	\begin{equation}
	 	d_{\text{max}} \leq|\mathbb{H}_{u}^{\ast}|\leq n.
	 	\end{equation}
	 	Therefore in the worst case, the same number of nodes can be herded as can be controlled and depending on the network structure many more nodes can be herded. 
	 	
	 	\subsection{A Necessary Condition for Complete Herdability}\label{sec:nes}
	 	
	 	This section shows how graph structure and system herdability are related by providing a necessary condition for complete herdability of a system known as input connectability. Note that input connectability is also a necessary condition for structural controllability \cite{Lin1974} and sign controllability \cite{tsatsomeros1998sign}. 
	 	
	 	\begin{definition}A graph is input connectable (equivalently, accessible) if \begin{equation} \bigcup_{v_{uj}\in\mathcal{V}_u} \mathscr{R}_{j}=\mathcal{V}_{x}, \end{equation} where $\mathscr{R}_{j}$ is the set of nodes reachable from $v_{uj}$:  $\mathscr{R}_{j}=\{v_{xi}\in\mathcal{V}_{x} \ | \ v_{uj} \rightarrow v_{xi}\}$.\end{definition}
	 	
	 	If a single node is not {accessible} then the system is not completely herdable. 
	 	
	 	\begin{theorem} \thlabel{th:innec}
	 		If a system is completely herdable, then it is input connectable. 
	 	\end{theorem}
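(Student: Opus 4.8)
The plan is to prove the contrapositive: I will show that if a system fails to be input connectable, then it cannot be completely herdable. So suppose the graph is not input connectable. By the definition of input connectability there is then a state node $v_{xi}$ with $v_{xi}\notin\bigcup_{v_{uj}\in\mathcal{V}_u}\mathscr{R}_{j}$; that is, $v_{xi}$ is not reachable from any input node, so no walk from any $v_{uj}$ to $v_{xi}$ exists, of any length.

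The key step is to translate this graph-theoretic dead end into a structural statement about the controllability matrix. I would invoke \thref{lem:cont}, which identifies the entry $\mathcal{C}_{i,(m(d-1)+j)}$ with $\rho_{j\rightarrow i,d}$, the total weight of length-$d$ walks from $v_{uj}$ to $v_{xi}$. Since $v_{xi}$ admits no walk from any input, it holds that $\rho_{j\rightarrow i,d}=0$ for every $j\in\{1,\dots,m\}$ and every $d\in\{1,\dots,n\}$. Consequently the entire $i$-th row of $\mathcal{C}$ vanishes, i.e. $\mathcal{C}_{i,:}=\mathbf{0}_{nm}^{T}$.

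From here the conclusion is immediate. If $\mathcal{C}_{i,:}$ is the zero row, then every vector $\mathbf{k}\in\mathrm{range}(\mathcal{C})$ satisfies $\mathbf{k}_{i}=0$, so no element-wise positive vector can lie in $\mathrm{range}(\mathcal{C})$. By \thref{cor:krange}, the system is not completely herdable, which is exactly the contrapositive of the claim.

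The only point requiring care is the truncation in $\mathcal{C}$: the matrix records walk weights only for lengths $1$ through $n$ (via $B,AB,\dots,A^{n-1}B$), whereas reachability concerns walks of arbitrary length. This causes no difficulty for the direction we need, because non-reachability means no walk of \emph{any} length exists, and in particular none of the lengths $1,\dots,n$ that appear in $\mathcal{C}$; restricting attention to these short walks only reinforces the conclusion that the row is zero. I therefore expect no genuine obstacle—the substance is the identification of the zero row of $\mathcal{C}$ through \thref{lem:cont}, with \thref{cor:krange} supplying the final link back to herdability.
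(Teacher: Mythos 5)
Your proposal is correct and follows essentially the same route as the paper: both argue the contrapositive, use \thref{lem:cont} to conclude that the $i$-th row of $\mathcal{C}$ vanishes for an unreachable node, and deduce that no vector in $\mathrm{range}(\mathcal{C})$ can be positive in that coordinate (the paper unwinds this back to the definition of herdability via \thref{lem:rceq}, while you cite \thref{cor:krange}, which is the same fact packaged as a corollary). Your remark on the truncation to walk lengths $1,\dots,n$ is a careful touch the paper leaves implicit, but it does not change the substance.
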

	 	\begin{proof}
	 		Suppose that there exists a node $v_{xi}$ such that $v_{xi}\notin\bigcup_{j}\mathscr{R}_{j}$ and as such there is no walk from an input to $v_{xi}$, i.e. the system is not input connectable.
	 		If there is no walk to $v_{xi}$, then $(\mathcal{C})_{i,:}=\mathbf{0}_n$ by \thref{lem:cont} and the node will not be herdable. {To see why consider making $\mathbf{x}(t)\geq h$ from an initial state $\mathbf{x}(0)=\mathbf{0}_{n}$. As $\forall j \ (\mathcal{C})_{i,j}=0$, it holds that $\forall \mathbf{z}\in \mathrm{range}(\mathcal{C}), \ \mathbf{z}_{i}=0$ and by \thref{lem:rceq} for any reachable $\mathbf{x}(t) \ ~\forall t\geq 0$, $\mathbf{x}_{i}(t)=0$ and state $x_i$ is not herdable.} As such, the system is not completely herdable. 
	 	\end{proof}
	 	
	 	{As was just seen in the case of the directed out-branching in Section \ref{sec:outb}, the underlying sign pattern influences herdability and as such input connectability is only a necessary condition. In the special case that the system is positive, input connectability is a sufficient condition as well. } 
	 	
	 	\begin{theorem}\thlabel{th:posin}
	 		A positive linear system is completely herdable if and only if it is input connectable. 
	 	\end{theorem}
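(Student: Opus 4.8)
The plan is to prove the two directions separately, noting that the forward implication requires no new work. If the positive system is completely herdable, then by \thref{th:innec} it is input connectable, since that necessary condition holds for arbitrary linear systems and uses no positivity hypothesis. Hence the entire content of the proof lies in the converse: showing that, \emph{for a positive system}, input connectability already suffices for complete herdability.

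For the converse I would exploit the defining feature of a positive system, namely that $A$ is Metzler and $B$ is entrywise nonnegative, so that $e^{A\sigma}\geq 0$ entrywise for every $\sigma\geq 0$. By \thref{cor:krange} it suffices to exhibit a single element-wise positive vector $\mathbf{k}\in\mathrm{range}(\mathcal{C})$. The natural candidate is the state reached from the origin under the constant input $\mathbf{u}\equiv\mathbf{1}_{m}$,
\[
\mathbf{k}=\int_{0}^{t} e^{A(t-\tau)}B\mathbf{1}_{m}\,d\tau,
\]
which lies in $\mathcal{R}[0,t]=\mathrm{range}(\mathcal{C})$ by \thref{lem:rceq}. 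Nonnegativity of $e^{A\sigma}$ and of $B$ makes the integrand nonnegative, so $\mathbf{k}\geq\mathbf{0}_{n}$ automatically; the real task is to upgrade this to strict positivity in every coordinate.

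The key step is the standard fact for Metzler matrices that, for $\sigma>0$, $(e^{A\sigma})_{ik}>0$ if and only if $v_{xi}$ is reachable from $v_{xk}$ in the state subgraph $\G_{x}$ (taking $i=k$ as the trivial walk). This follows by writing $A=-cI+N$ with $N\geq 0$ and expanding $e^{N\sigma}$ as a nonnegative power series, whose $(i,k)$ entry is positive exactly when some power $N^{d}$ has a nonzero $(i,k)$ entry, i.e. exactly when a walk from $k$ to $i$ exists. Fixing a coordinate $i$ and writing $(e^{A(t-\tau)}B\mathbf{1}_{m})_{i}=\sum_{k}(e^{A(t-\tau)})_{ik}(B\mathbf{1}_{m})_{k}$, input connectability supplies an input $v_{uj}$ and a walk $v_{uj}\to v_{xk_{0}}\to\cdots\to v_{xi}$; its first edge gives $B_{k_{0}j}\neq 0$, hence $(B\mathbf{1}_{m})_{k_{0}}>0$, while the remaining walk gives $(e^{A(t-\tau)})_{ik_{0}}>0$ for $\tau<t$. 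Thus the integrand is strictly positive on $(0,t)$ and $\mathbf{k}_{i}>0$. Since every state is reachable from some input, this holds for all $i$, so $\mathbf{k}$ is element-wise positive and \thref{cor:krange} yields complete herdability.

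The main obstacle I anticipate is precisely the reachability characterization of the sign pattern of $e^{A\sigma}$: one must argue that positivity of $(e^{A\sigma})_{ik}$ is equivalent to \emph{graph reachability} rather than merely to nonnegativity, and in particular that the possibly negative diagonal of the Metzler matrix $A$ (the self-loops of $\G_{x}$) cannot introduce cancellation that annihilates a reachable coordinate. Note that a direct attack through the columns of $\mathcal{C}$ in the style of Section~\ref{sec:class} does not work here, because negative self-loops can make $(A^{d-1}B)_{i,j}$ negative or zero and destroy the unisigned-column structure. The Metzler shift $A=-cI+N$ isolates the difficulty cleanly, since the diagonal shift only scales $e^{A\sigma}$ by the positive factor $e^{-c\sigma}$ and leaves the off-diagonal reachability structure, which governs walks between distinct states, intact.
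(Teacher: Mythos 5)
Your proof is correct, but it takes a genuinely different route from the paper's. The paper handles sufficiency in one line by citing Theorem 8 of \cite{Farina2011}: an input connectable positive linear system is excitable, which immediately yields an element-wise positive vector in the reachable subspace, hence complete herdability via \thref{lem:rceq} and \thref{cor:krange}. You instead prove that fact from scratch: the constant input $\mathbf{1}_{m}$ produces the reachable state $\mathbf{k}=\int_{0}^{t}e^{A(t-\tau)}B\mathbf{1}_{m}\,d\tau$, and the Metzler shift $A=-cI+N$ with $N\geq 0$ turns the sign analysis of $e^{A\sigma}$ into a cancellation-free power series, so that $(e^{A\sigma})_{ik}>0$ exactly when $v_{xi}$ is reachable from $v_{xk}$ in the state subgraph; input connectability then forces every coordinate of $\mathbf{k}$ to be strictly positive, and \thref{cor:krange} finishes, with necessity handled identically in both arguments via \thref{th:innec}. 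What your version buys is self-containedness --- it is essentially a proof of the Farina--Rinaldi excitability theorem embedded inside the herdability argument --- and it makes explicit \emph{why} positivity rules out the path cancellation that obstructs sign-based reasoning elsewhere in the paper; your side remark that the unisigned-column technique of Section~\ref{sec:class} fails here, because negative diagonal entries can make columns of $\mathcal{C}$ balanced even for herdable systems, is accurate and explains why the argument must go through the reachable subspace rather than through the columns of $\mathcal{C}$. What the paper's version buys is brevity, at the cost of leaning on an external result whose proof is, in substance, the computation you carried out.
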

	 	\begin{proof}
	 		(Sufficiency)
	 		By Theorem $8$ of \cite{Farina2011}, an input connectable, positive linear system is excitable. Then there is an element-wise positive vector in the reachable subspace, which is also the range of the controllability matrix by \thref{lem:rceq}. Then by \thref{cor:krange}, the system is completely herdable. 
	 		
	 		\noindent(Necessity) Follows from \thref{th:innec}.
	 	\end{proof}
	 	{\thref{th:posin} is important for the application of herdability as many biological and social systems are described by positive systems \cite{Farina2011}. 
	 		
	 		Another possibility is that for a given sign pattern, paths cancel in the graph which would prevent an input connectible graph from being completely herdable. As discussed at the end of Section \ref{sec:pre}, if the underlying graph is structurally balanced then such cancellations can not occur. We conjecture that structural balance and accessibility may form the basis for a sufficient condition for herdability. }
	 	
	 	\subsection{Sufficient Graph Conditions for Sign Herdability}\label{sec:suf_graph}
	 	This section will now consider the sufficient condition of Section \ref{sec:class} in light of the characterization of the controllability matrix given in Section~\ref{sec:pre}. The following Theorems provide a case where the composition of the sets $\pd^{j}$ and $\nd^{j}$ uniquely determines the sign of the columns of the controllability matrix and in turn the herdability of the graph, i.e. one is able to show the sign-herdability of the system. {These conditions are also less stringent than the requirements for a sign definite $\C$ as they only ask that certain columns of the $\C$ have a fixed sign pattern.}  
	 	\begin{definition}
	 		A node is \emph{sign strictly herdable} if it is strictly herdable for all graphs with the same sign pattern.
	 	\end{definition}

 	\begin{lemma}\thlabel{lem:ssh}
If for the node $v_{xi}$ there exists a distance $d$ and an input $v_{uj}$ such that $v_{xi}\in\mathcal{N}_{d}^{j}\cup\mathcal{P}_{d}^{j}$ and $\mathcal{N}_{d}^{j}=\emptyset \veebar \mathcal{P}_{d}^{j}=\emptyset$, where $\veebar$ denotes exclusive OR, it is sign strictly herdable. 
 	\end{lemma}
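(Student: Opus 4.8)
The plan is to exhibit, for every weighting consistent with $\G^{s}$, a single column of $\mathcal{C}$ that is unisigned with a nonzero entry in position $i$, and then to invoke \thref{lem:s} together with the sign-invariance supplied by \thref{lem:pdti}. The key structural facts I will lean on are \thref{lem:cont}, which identifies $\mathcal{C}_{k,m(d-1)+j}$ with $\rho_{j\rightarrow k,d}$, and \thref{lem:pdti}, which pins down the sign of $\rho_{j\rightarrow k,d}$ from the membership of $v_{xk}$ in $\mathcal{N}_{d}^{j}$ and $\mathcal{P}_{d}^{j}$.

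First I would reduce to a single case. Since $v_{xi}\in\mathcal{N}_{d}^{j}\cup\mathcal{P}_{d}^{j}$ while exactly one of $\mathcal{N}_{d}^{j}$, $\mathcal{P}_{d}^{j}$ is empty, the node $v_{xi}$ necessarily lies in the nonempty set. Without loss of generality assume $\mathcal{N}_{d}^{j}=\emptyset$ and $v_{xi}\in\mathcal{P}_{d}^{j}$; the complementary case $\mathcal{P}_{d}^{j}=\emptyset$ is handled identically after negating the column under consideration, since negation preserves the unisigned property.

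Next I would examine the column $\mathbf{c}:=\mathcal{C}_{:,\,m(d-1)+j}$. By \thref{lem:cont} its $k$-th entry is $\rho_{j\rightarrow k,d}$. For each $k$, emptiness of $\mathcal{N}_{d}^{j}$ forces $v_{xk}\notin\mathcal{N}_{d}^{j}$, so there are only two possibilities: either $v_{xk}\in\mathcal{P}_{d}^{j}$, in which case $v_{xk}\in\mathcal{P}_{d}^{j}\wedge v_{xk}\notin\mathcal{N}_{d}^{j}$ and \thref{lem:pdti} gives $\rho_{j\rightarrow k,d}>0$, or $v_{xk}$ lies in neither set and $\rho_{j\rightarrow k,d}=0$. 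Hence every nonzero entry of $\mathbf{c}$ is positive, i.e.\ $\mathbf{c}$ is a positive unisigned vector, and because $v_{xi}\in\mathcal{P}_{d}^{j}$ its $i$-th entry is strictly positive. Thus $\mathbf{c}$ is a unisigned column of $\mathcal{C}$ with a nonzero element in position $i$, and \thref{lem:s} yields that $x_i$ is strictly herdable.

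Finally, to upgrade strict herdability to \emph{sign} strict herdability, I would observe that none of the preceding argument referenced the edge weights: membership of a node in $\mathcal{N}_{d}^{j}$ or $\mathcal{P}_{d}^{j}$, and therefore the hypotheses of the lemma, are combinatorial properties of the signed graph $\G^{s}$ alone, while \thref{lem:pdti} fixes the sign of each $\rho_{j\rightarrow k,d}$ independently of the weights. Consequently the same column $\mathbf{c}$ is unisigned and nonzero at $i$ for every weighting consistent with $\G^{s}$, so $v_{xi}$ is strictly herdable in all such systems, which is exactly the definition of sign strict herdability. I expect the only delicate point to be this last step: it is important to state explicitly that the emptiness conditions are invariants of $\G^{s}$ rather than artifacts of a particular weighting, so that \thref{lem:pdti} applies uniformly across the whole sign class; the remainder is essentially a direct reading of the earlier lemmas.
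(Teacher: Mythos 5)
Your proof is correct and follows essentially the same route as the paper's: both use \thref{lem:cont} to identify the column $\mathcal{C}_{:,m(d-1)+j}$ with the walk weights $\rho_{j\rightarrow k,d}$, use \thref{lem:pdti} and the emptiness of one of $\mathcal{N}_{d}^{j},\mathcal{P}_{d}^{j}$ to conclude that column is unisigned with a nonzero entry at position $i$ independently of the edge weights, and then invoke \thref{lem:s} to conclude sign strict herdability. Your explicit WLOG case split and per-entry analysis is just a slightly more detailed organization of the paper's argument (which instead notes that the exclusive OR forces $v_{xi}\notin\mathcal{N}_{d}^{j}\cap\mathcal{P}_{d}^{j}$), and your closing emphasis that the hypotheses are properties of $\G^{s}$ alone is a welcome clarification of the paper's weight-independence claim.
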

 \begin{proof}
	 		Consider the herdability of a node $v_{xi}$ which satisfies $v_{xi}\in\mathcal{N}_{d^{i}}^{j^{i}}\cup\mathcal{P}_{d^{i}}^{j^{i}}$ and $\mathcal{N}_{d^{i}}^{j^{i}}=\emptyset \veebar \mathcal{P}_{d^{i}}^{j^{i}}=\emptyset$ for some $d^{i}$ and $v_{uj^{i}}$. The fact that $\mathcal{N}_{d^{i}}^{j^{i}}=\emptyset \veebar \mathcal{P}_{d^{i}}^{j^{i}}=\emptyset$ implies that $\mathcal{N}_{d^{i}}^{j^{i}}\cap\mathcal{P}_{d^{i}}^{j^{i}}=\emptyset$, and as such it must be that $v_{xi}\in\mathcal{N}_{d^{i}}^{j^{i}}\cup\mathcal{P}_{d^{i}}^{j^{i}}$ and $v_{xi}\notin\mathcal{N}_{d^{i}}^{j^{i}}\cap\mathcal{P}_{d^{i}}^{j^{i}}$.

From \thref{lem:cont} and \thref{lem:pdti}, this implies $(\mathcal{C})_{i,m(d^{i}-1)+j^{i}}\neq 0$. 
Additionally, as $\mathcal{N}_{d^{i}}^{j^{i}}=\emptyset \veebar \mathcal{P}_{d^{i}}^{j^{i}}=\emptyset$, \thref{lem:pdti} show that all nonzero elements of $(\mathcal{C})_{:,m(d^{i}-1)+j^{i}}$ have the same sign and that the sign does not depend on the edge weights. Therefore for all graphs with the same sign pattern, the node is contained in $\mathcal{S}$, the set of nodes for which there exists a unisigned column of $\C$. Then by \thref{lem:s} each node is sign strictly herdable.
 \end{proof}
{For simple graph structures, the condition of \thref{lem:ssh} can be easily shown to be satisfied. For example in the case of directed outbranching, the condition holds if the columns of $A$ and $B$ are unisigned. which corresponds to each layer of the out branching having connections of the same sign.  }
	 	\begin{theorem}\thlabel{th:all}
	 		If for each $v_{xi}\in\mathcal{V}_{x}$ there exists a distance $d$ and an input $v_{uj}$ such that $v_{xi}\in\mathcal{N}_{d}^{j}\cup\mathcal{P}_{d}^{j}$ and $\mathcal{N}_{d}^{j}=\emptyset \veebar \mathcal{P}_{d}^{j}=\emptyset$, then the system is completely sign herdable. 
	 	\end{theorem}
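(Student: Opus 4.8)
The plan is to obtain \thref{th:all} as an immediate consequence of the per-node result \thref{lem:ssh} together with the aggregation result \thref{th:suff}, so that essentially no new computation is required. The hypothesis of the theorem is precisely the hypothesis of \thref{lem:ssh}, but now asserted simultaneously for every state node $v_{xi}\in\mathcal{V}_{x}$ rather than for a single node; the task is simply to thread this per-node conclusion through the definitions of sign strict herdability and complete sign herdability.

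First I would apply \thref{lem:ssh} to each $v_{xi}\in\mathcal{V}_{x}$: for each such node the assumed distance $d$ and input $v_{uj}$ (which may depend on $i$) witness that $v_{xi}\in\mathcal{N}_{d}^{j}\cup\mathcal{P}_{d}^{j}$ with $\mathcal{N}_{d}^{j}=\emptyset \veebar \mathcal{P}_{d}^{j}=\emptyset$, so \thref{lem:ssh} yields that $v_{xi}$ is sign strictly herdable. Next I would fix an arbitrary weighted realization $\G^{w}$ sharing the sign structure $\G^{s}$. Unpacking the definition of sign strict herdability, every node $v_{xi}$ is then strictly herdable in this particular realization; that is, for each $i$ there is a unisigned $\mathbf{k}^{i}\in\mathcal{R}[0,t]$ with $\mathbf{k}^{i}_{i}\neq0$. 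Since this holds for all states simultaneously in the fixed realization, \thref{th:suff} applies and shows that $\G^{w}$ is completely herdable. Because $\G^{w}$ was an arbitrary realization with sign structure $\G^{s}$, the definition of complete sign herdability is satisfied.

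The only point requiring care, and the closest thing to an obstacle, is the bookkeeping of quantifiers rather than any analytic difficulty, since the substantive argument (that the relevant column of $\mathcal{C}$ is unisigned with a sign independent of the edge weights) has already been carried out inside \thref{lem:ssh} via \thref{lem:cont} and \thref{lem:pdti}. Specifically, I would be careful that the witnessing pair of distance and input is allowed to vary with $i$, which is harmless because \thref{lem:ssh} needs only one such witness per node; and that \thref{th:suff} is invoked only after fixing a single weighted realization, so that the many strictly herdable vectors $\mathbf{k}^{i}$ all lie in the same $\mathrm{range}(\mathcal{C})$ before a nonnegative combination is formed. Threading the universal quantifier over all realizations sharing $\G^{s}$ around the outside of this argument then delivers complete sign herdability.
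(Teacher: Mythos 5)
Your proposal is correct and follows exactly the paper's own route: invoke \thref{lem:ssh} node by node to get sign strict herdability of every state, then conclude via \thref{th:suff}. The paper's proof is a one-liner that leaves implicit the quantifier bookkeeping you spell out (fixing an arbitrary realization of $\G^{s}$ before applying \thref{th:suff}, then universally quantifying over realizations), so your version is the same argument, just stated more carefully.
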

	 	\begin{proof}
 By \thref{lem:ssh} each node is sign strictly herdable and the system is completely sign herdable by \thref{th:suff}.
	 	\end{proof}
	 	\thref{th:all} is an extension of \thref{lem:s} to the sign structure of the network. Consider the following definition which allows the extension of \thref{lem:shat}.
	 	
	 	\begin{definition}
	 		A node $v_{xi}$ is said to be sign balanced if there exists a distance $d$ and an input $v_{uj}$ such that $v_{xi}\in\mathcal{N}_{d}^{j}\cup\mathcal{P}_{d}^{j}$, $v_{xi}\notin\mathcal{N}_{d}^{j}\cap\mathcal{P}_{d}^{j}$,  and all nodes that balance $v_{xi}$ at distance $d$ from an input $v_{uj}$ are sign strictly herdable.
	 	\end{definition}
	 	\begin{theorem} \thlabel{th:allplus}
	 		If all nodes are sign balanced then the system is completely sign herdable. 
	 	\end{theorem}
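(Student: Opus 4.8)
The plan is to show that the hypothesis forces every state node to be sign strictly herdable, and then to invoke \thref{th:suff} realization by realization, exactly mirroring the way \thref{th:all} chains \thref{lem:ssh} into \thref{th:suff}. Concretely, \thref{th:allplus} is the sign-pattern analogue of \thref{lem:shat}: where \thref{lem:shat} cleared the balancing entries of a single balanced column using strictly herdable nodes, here I would clear them uniformly across all weightings compatible with $\G^{s}$ using sign strictly herdable nodes. Once each node is sign strictly herdable, every weighted realization has all states strictly herdable, so \thref{th:suff} makes each such realization completely herdable, which is precisely complete sign herdability.

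The core step is to prove that a sign balanced node $v_{xi}$ is sign strictly herdable. First I would fix the witnessing distance $d$ and input $v_{uj}$ from the definition, and assume without loss of generality that $v_{xi}\in\mathcal{P}_{d}^{j}$ (the case $v_{xi}\in\mathcal{N}_{d}^{j}$ follows by negating the relevant column). Because $v_{xi}\notin\mathcal{N}_{d}^{j}\cap\mathcal{P}_{d}^{j}$, the node lies in exactly one of the two sets, so \thref{lem:pdti} together with \thref{lem:cont} gives $\mathcal{C}_{i,\,m(d-1)+j}=\rho_{j\rightarrow i,d}>0$ with a sign that is fixed across every weighting of $\G^{s}$. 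The key observation is then to identify the balancing nodes intrinsically: any node $v_{xk}$ whose entry $\mathcal{C}_{k,\,m(d-1)+j}=\rho_{j\rightarrow k,d}$ is negative must admit a negative walk of length $d$ from $v_{uj}$, i.e. $v_{xk}\in\mathcal{N}_{d}^{j}$. Hence in every realization the set of negative entries of the column $\mathbf{b}=\mathcal{C}_{:,\,m(d-1)+j}$ is contained in the fixed set $\mathcal{N}_{d}^{j}$ of nodes that balance $v_{xi}$, all of which are sign strictly herdable by hypothesis.

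With the balancers pinned down, I would run the construction of \thref{lem:shat} inside an arbitrary fixed realization. For each node $v_{xs}$ that is negative in $\mathbf{b}$, sign strict herdability provides a nonnegative, unisigned vector $\mathbf{v}^{s}\in\mathrm{range}(\mathcal{C})$ with $\mathbf{v}^{s}_{s}>0$ valid for that realization. Choosing positive weights $\alpha_{s}$ large enough, the vector $\hat{\mathbf{s}}=\mathbf{b}+\sum_{s}\alpha_{s}\mathbf{v}^{s}$ lies in $\mathrm{range}(\mathcal{C})$, remains positive at $v_{xi}$ and at every favoring node, and has each previously negative entry lifted above zero; since the $\mathbf{v}^{s}$ are nonnegative, no new negative entries are created and the zero entries stay nonnegative. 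Thus $\hat{\mathbf{s}}$ is unisigned with $\hat{\mathbf{s}}_{i}>0$, so $v_{xi}$ is strictly herdable in that realization. As the realization was arbitrary, $v_{xi}$ is sign strictly herdable.

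Applying this to every node (all are sign balanced by hypothesis) shows every node is sign strictly herdable. Fixing any weighting compatible with $\G^{s}$, all states are then strictly herdable, so \thref{th:suff} yields complete herdability of that weighting; since the weighting was arbitrary, the system is completely sign herdable. The main obstacle I anticipate is precisely the uniformity over weightings: one must verify that the \emph{actual} negative entries of $\mathbf{b}$ in each realization are always drawn from the single, weight-independent balancer set $\mathcal{N}_{d}^{j}$, so that the same hypothesis (sign strict herdability of the balancers) suffices simultaneously for all realizations. The inclusion of negative entries into $\mathcal{N}_{d}^{j}$ established above is what makes this work, and it is also what rules out any circularity, since the balancers are sign strictly herdable independently of $v_{xi}$.
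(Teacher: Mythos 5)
Your proposal is correct and follows essentially the same route as the paper: show that each sign balanced node is sign strictly herdable (the paper does this by citing the fixed-sign column guaranteed by \thref{lem:pdti} and \thref{lem:cont} together with the hypothesis that all balancers are sign strictly herdable, i.e., the \thref{lem:shat} construction), and then conclude via \thref{th:suff}. The paper's proof is terser and leaves implicit the uniformity point you make explicit — that in every realization the negative entries of the chosen column are drawn from the weight-independent balancer set $\mathcal{N}_{d}^{j}$ — but this is a fleshing-out of the same argument rather than a different one.
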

	 	\begin{proof}
	 		As for each $v_{xi}$ there exists a distance $d$ and an input $v_{uj}$ such that $v_{xi}\in\mathcal{N}_{d}^{j}\cup\mathcal{P}_{d}^{j}$, $v_{xi}\notin\mathcal{N}_{d}^{j}\cap\mathcal{P}_{d}^{j}$, there is a column of $\mathcal{C}$ whose sign with respect to $v_{xi}$ is always consistent regardless of the weight of the edges in the walks that connect the input $v_{uj}$ and $v_{xi}$. As it is balanced by sign strictly herdable nodes, node $v_{xi}$ is sign strictly herdable. As all nodes are strictly sign herdable then the whole system is sign herdable by \thref{th:suff}. 
	 	\end{proof}
	 	\thref{th:all} and \thref{th:allplus}, as well as \thref{lem:s,lem:shat} (their counterpart based on the controllability matrix $\mathcal{C}$), provide sufficient conditions to verify that a node is strictly herdable. However as they are only sufficient there are completely herdable systems which can not be identified by verifying the conditions of \thref{lem:s,lem:shat,th:all,th:allplus}. Figure~\ref{fig:exnond} shows a simple example.
	 	\begin{figure}[h]
	 		\centering
	 		\begin{tikzpicture}[
	 		->,
	 		shorten >=2pt,
	 		auto,
	 		node distance=0.5cm,
	 		every text node part/.style={align=center},
	 		scale=0.25, every node/.style={scale=1}
	 		]
	 		\tikzstyle{every state}=[fill=none,draw=black,text=black]
	 		\node[
	 		state,
	 		] (n0) at(0,0)
	 		{$u$};
	 		\node[
	 		state,
	 		] (n1) at(-12,-12)
	 		{$x_{1}$};
	 		\node[
	 		state,
	 		] (n2) at(0,-12)
	 		{$x_{2}$};
	 		\node[
	 		state,
	 		] (n3) at(12,-12)
	 		{$x_{3}$};
	 		\path (n0) edge [->] node [left,above] {$+$}(n1)
	 		(n0) edge [->]  node {$+$} (n2)
	 		(n0) edge [->]  node {$-$} (n3)
	 		(n3) edge [->] node[left,above] {$+$} (n2)
	 		(n2) edge [->]  node [above] {$+$} (n1)
	 		(n1) edge [->, bend right]  node [below] {$+$} (n3);
	 		\end{tikzpicture}
	 		\caption[A Counter Example]{An example of a completely herdable graph which can not be identified by inspecting the columns of $\mathcal{C}$ nor the sets $\mathcal{N}_{d}^{j}$ and $\mathcal{P}_{d}^{j}$}
	 		\label{fig:exnond}
	 	\end{figure}
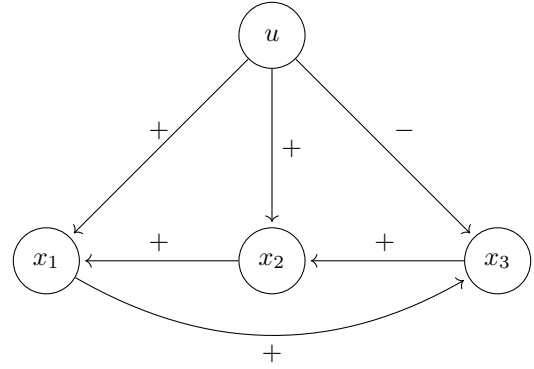
	 	
	 	\section{Conclusion}\label{sec:conc}
	 	In this paper, the basic theory of herdable systems was presented. The definition of herdability was shown to translate to simple conditions based on three matrices: the controllability grammian $\W$, the controllability matrix $\C$, and the matrix $\AB$. A number of sufficient conditions where shown which could be used to show herdability of a system by inspecting the columns of each of the aforementioned matrices. 
	 	
	 	The characterization of herdability based on the controllability matrix was extended to consider the underlying graph of the dynamical system. It was shown that a certain loss of symmetry, as shown by a balanced vector in the range of the controllability subspace, ensures that a system is no longer completely herdable. Additionally it was shown that as herdability is only dependent on the sign of an link between two state nodes, any characterization of herdability based on the controllability matrix can be extended to a class of systems with the same sign pattern if it can shown that the sign of the columns of the controllability matrix do not depend on the underlying edge weights.
	 	
	 		The results on the sign herdability of linear systems depend on the characterization of the sets $\mathcal{N}_{d}^{j}$ and $\mathcal{P}_{d}^{j}$. In the case where the underlying graph of $A$ is structurally balanced these need not be explicitly calculated as the sign herdability of the system can be determined directly from the controllability matrix $\C$. In the case where the underlying graph is not structurally balanced the sets $\mathcal{N}_{d}^{j}$ and $\mathcal{P}_{d}^{j}$ must be determined from the graph structure. 
	 		Unfortunately determining $\mathcal{N}_{d}^{j}$ and $\mathcal{P}_{d}^{j}$ via graph traversal involves considering all paths between an input node and a state node in the graph, of which there are potentially an exponential number.
	 	
	 	In the case where $A$ has an underlying graph which is a directed acyclic graph then these sets can be determined in linear time. If A is not acyclic then there is a possibility that the time complexity of the algorithm grows exponentially in the number of state nodes. As such the presented results, in terms of practical implementation, are currently best suited for graphs which are acyclic or sufficiently close to acyclic that the sets $\mathcal{N}_{d}^{j}$ and $\mathcal{P}_{d}^{j}$ can be calculated. 
\bibliographystyle{IEEEtran}
\bibliography{herdability}
\end{document}